\newtheorem{theorem}{Theorem}
\newtheorem{lemma}[theorem]{Lemma}
\newcommand{\comm}[1]{}
\begin{document}

\title{Swift-Link: A compressive beam alignment algorithm for practical mmWave radios}
\author{{\IEEEauthorblockN{Nitin Jonathan Myers, {\it Student Member, IEEE}, Amine Mezghani, {\it Member, IEEE},\\  and Robert W. Heath Jr., {\it Fellow, IEEE}. }}
\thanks{ N. J. Myers (nitinjmyers@utexas.edu), A. Mezghani (amine.mezghani@utexas.edu) and R. W. Heath Jr. (rheath@utexas.edu) are with the  Wireless Networking and Communications  Group, The University of Texas at Austin, Austin,
TX 78712 USA. This research was partially supported by the National Science
Foundation under grant numbers NSF-CNS-1702800, NSF-CNS-1731658, and by the U.S. Department of Transportation through the Data-Supported Transportation Operations and Planning (D-STOP) Tier 1 University Transportation Center.}}

\maketitle
\begin{abstract}
Next generation wireless networks will exploit the large amount of spectrum available at millimeter wave (mmWave) frequencies. Design of mmWave systems, however, is challenging due to strict power, cost and hardware constraints at higher bandwidths. To achieve a good SNR for communication, mmWave systems use large antenna arrays. Beamforming with highly directional beams is one way to use the antennas. As the channel changes over time, the beams that maximize the SNR have to be estimated quickly to reduce the training overhead. Prior work has exploited the observation that mmWave channels are sparse to perform compressed sensing (CS) based beam alignment with few channel measurements. Most of the existing CS-based algorithms, however, assume perfect synchronization and fail in the presence of carrier frequency offset (CFO). This paper presents Swift-Link, a fast beam alignment algorithm that is robust against the offset. Swift-Link includes a novel randomized beam training sequence that minimizes the beam alignment errors due to CFO and a low-complexity algorithm that corrects these errors. Even with strict hardware constraints, our algorithm uses fewer channel measurements than comparable CS algorithms and has analytical guarantees. Swift-Link requires a small output dynamic range at the analog-to-digital converter compared to beam-scanning techniques. %Furthermore, it is robust to quantization noise unlike non-coherent algorithms. 
\end{abstract}

\begin{IEEEkeywords} 
Compressive beam training, carrier frequency offset, robust beam alignment, low-complexity sparse self-calibration, robust compressive sensing, mm-wave.
\end{IEEEkeywords}
\IEEEpeerreviewmaketitle

\section{Introduction}
Multiple-input multiple-output (MIMO) communication at millimeter wave is very different from that in the  lower frequency systems \cite{ranganmmwave}. To achieve sufficent link margin, large antenna arrays have to be used at the transmitter (TX) and the receiver (RX) \cite{mmintro}. As a result, mmWave MIMO channels  have a higher dimension compared to typical lower frequency systems. Furthermore, cost and power consumption are major issues at the large bandwidths that accompany mmWave \cite{heathoverview}, and thereby restrict the use of conventional MIMO architectures. The phased array architecture \cite{heathoverview} is one hardware design that limits power consumption by using fewer radio-frequency chains compared to the number of antennas. As a result, the phased array can only obtain a lower dimensional projection of the MIMO channel for a given phase shift configuration. The compressive nature of the hardware and the use of large antenna arrays complicate tasks like channel estimation or beam alignment in mmWave systems.
\par Most of the beam alignment techniques that exploit sparsity of mmWave channels fail if applied to a system where CFO has not been corrected \cite{agile}. As mmWave MIMO channels are approximately sparse in a certain dictionary \cite{heathoverview}, several algorithms that exploit sparsity have been proposed to perform channel estimation or beam alignment with few measurements \cite{cschest, kiranchannel}. Most of them are based on standard compressed sensing (CS) \cite{csintro} and are vulnerable to CFO. CFO arises due to the mismatch in the carrier frequencies of the local oscillators at the TX and the RX \cite{heathwicomm}. The offset corrupts the phase of the channel measurements and thereby CS algorithms that do not model CFO perform poorly. Therefore, joint CFO and channel estimation algorithms or beam alignment algorithms that are robust to CFO must be developed.
\par Beam alignment techniques like the exhaustive beam search and hierarchical search \cite{exhhier}  are robust to CFO as they use just the magnitude of the channel measurements. On the one hand, exhaustive beam search scans all possible directions and results in large training overhead. On the other hand, hierarchical search suffers from low SNR prior to beamforming and is susceptible to multipath interference within a sector. Furthermore, it requires feedback on every decision to continue with the search process in transmit beam alignment \cite{agile}. The limitations of exhaustive and hierarchical search motivate the need to develop beam alignment algorithms that can exploit the sparsity of mmWave channels and are robust to CFO. 
\par Compressive beam alignment algorithms that are robust to CFO were proposed in \cite{agile},\cite{nitinanalog} and \cite{Javi} for the phased array architecture. In our prior work \cite{nitinanalog}, we modeled both CFO and the mmWave channel in a tensor, and exploited sparsity of the tensor to perform joint CFO and channel estimation. The sparsity-aware algorithm in \cite{nitinanalog} can also be applied to joint CFO estimation and beam alignment. The approach in \cite{nitinanalog}, however, increases the number of optimization variables and results in high computational complexity. A multi-stage technique that overcomes the high complexity of the tensor based approach was proposed in \cite{Javi} for joint CFO and channel estimation. A non-coherent compressive beam alignment strategy was developed in \cite{agile} using hashing techniques. The energy based algorithm in \cite{agile}, however, assumes high SNR and ignores the phase of the received samples.  
\par In this paper, we optimize the CS measurement matrix for beam alignment robust to CFO, and present a low complexity algorithm called Swift-Link. We assume that there is perfect frame timing synchronization between the TX and the RX. We summarize the main contributions of our work as follows.
\begin{itemize}
\item For planar phased arrays, we define a class of training matrices that use Zadoff-Chu (ZC)  sequences along the spatial dimension. We show that CS using the ZC-based training is equivalent to partial DFT CS of a transformed channel. With this equivalence, we derive reconstruction guarantees for compressive channel estimation under phased array constraints. Furthermore, we  propose the concept of a trajectory to develop a tractable approach for CS-based beam alignment robust to CFO. 
\item We design a novel randomized beam training sequence within the class of ZC-based training such that CFO induces bounded shifts in the beamspace channel obtained using standard CS algorithms. Such shifts can be ignored when the CFO is small relative to the angular resolution of the array. For other cases, we provide a low-complexity algorithm to correct the shifts. 
\item We determine a sufficient condition on the number of channel measurements for successful beam training using our algorithm. For an $N \times N$ uniform planar array, our analysis suggests $\mathcal{O}(N \mathrm{log} N)$ channel measurements suffice for CFO robust  compressive channel estimation using Swift-Link. Our analysis is also valid for the off-grid case and can be extended to generic CS problems with structured errors in the CS measurements.   
\item Using simulations, we evaluate the performance of our algorithm for fast beam alignment in  a practical wideband mmWave setting. Swift-Link estimates the MIMO channel and CFO compressively, and performs better than comparable algorithms that are robust to CFO.
\end{itemize}
Swift-Link performs compressive channel estimation robust to CFO using the optimized CS measurement matrix. The estimated channel is then used for beam alignment. While we assume an analog beamforming setup, Swift-Link can be extended to switching and hybrid beamforming architectures in a straightforward manner. 
%\par The paper is organized as follows. We describe the hardware setting for beam alignment and the mmWave channel model in Section \ref{sec:syschan}. In Section \ref{sec:RIPguaran}, we propose  the concepts of spectral mask and virtual channel when Zadoff-Chu sequences are used in the antenna domain. In Section \ref{sec:CFOimpactfull}, we study the impact of CFO on beam alignment using full sampling trajectories. Using insights from our study in Section \ref{sec:CFOimpactfull}, we derive a new randomized sub-sampling trajectory in Section \ref{sec:usamp} that achieves robustness to CFO. In Section \ref{sec:strategy}, we propose a low-complexity approach to correct the impairments induced by CFO with the designed trajectory, and provide analysis for Swift-Link. We extend Swift-Link to the wideband case in Section \ref{sec:wbextend}. Simulation results are presented in Section \ref{sec:swsim}, followed by conclusions and future work in Sec \ref{sec:concl_fut}.
\par \textbf{Notation}$:$ $\mathbf{A}$ is a matrix, $\mathbf{a}$ is a column vector and $a, A$ denote scalars. Using this notation $\mathbf{A}^T,\overline{\mathbf{A}}$ and $\mathbf{A}^{\ast} $ represent the transpose, conjugate and conjugate transpose of $\mathbf{A}$. We use $\mathrm{diag}\left(\mathbf{a}\right)$ to denote a diagonal matrix with entries  of $\mathbf{a}$ on its diagonal. The scalar $a\left[m \right]$ denotes the $m^{\mathrm{th}}$ element of $\mathbf{a}$, and $\mathbf{A}\left(k,\ell\right)$ is the entry of $\mathbf{A}$ in the $k^{\mathrm{th}}$ row and ${\ell}^{\mathrm{th}}$ column. The inner product of two matrices $\mathbf{A}$ and $\mathbf{B}$ is defined as $\left\langle \mathbf{A},\mathbf{B}\right\rangle =\sum_{k,\ell}\mathbf{A}\left(k,\ell \right)\overline{\mathbf{B}}\left(k,\ell\right)$. The symbols $\odot$ and $\circledast$ are used for the Hadamard product and 2D circular convolution \cite{imageprocess}. The matrix $\mathbf{U}_N \in \mathbb{C}^{N \times N}$ denotes the unitary Discrete Fourier Transform (DFT) matrix. The set $\mathcal{I}_N$ denotes the set of integers $\left\{ 0,1,2,...,N-1\right\}$. The difference between two sets $\mathcal{A}$ and $\mathcal{B}$ is defined as $\mathcal{A} \setminus \mathcal{B}$. We use $\mathbf{e}_k$ to represent the $(k+1)^{\mathrm{th}}$ canonical basis vector and $\mathbf{1}$ to represent an all-ones vector. We use $\left\lceil a \right\rceil$ and $\left\lfloor a \right\rfloor $ to denote the ceil and floor of $a$. The natural logarithm of $a$ is $\mathrm{log}(a)$.
\section{System and channel model} \label{sec:syschan}
In this section, we describe the underlying hardware setup and channel model used in our work.  For simplicity of exposition, we consider a narrowband mmWave system with an analog beamforming architecture and focus on transmit beam alignment. We extend our algorithm to the wideband case in Section \ref{sec:wbextend} and provide simulation results for a practical mmWave setting in Section \ref{sec:swsim}. 
\subsection{System model} \label{sec:sysmodel}
\begin{figure}[h]
\centering
\vspace{-5mm}
\includegraphics[trim=0cm 0cm 0cm 0cm,clip=true ,width=0.5 \textwidth]{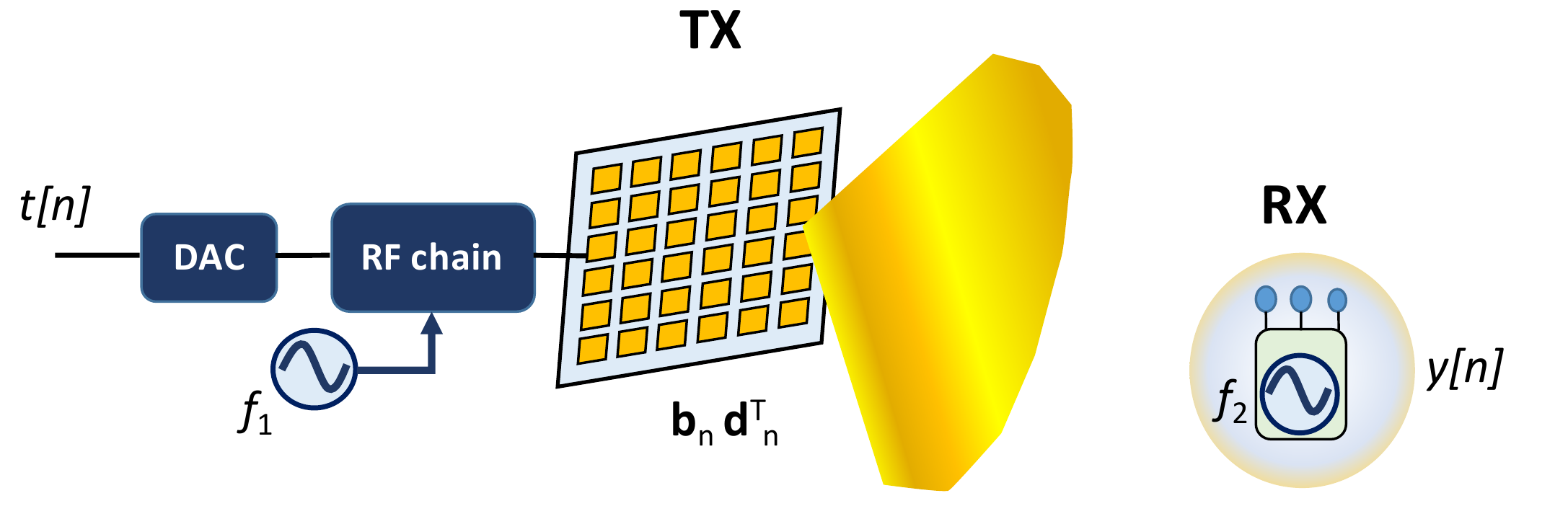}
\caption{Beam training in an analog beamforming system with local oscillators operating at $f_1$ and $f_2$. The TX transmits several beacons using an $N\times N$ phased array and the RX acquires channel measurements with a fixed quasi-omnidirectional pattern.}
  \label{fig:architect}
\end{figure}
We consider a MIMO system in which the TX is equipped with an $N \times N$ uniform planar array (UPA) of antennas as shown in Fig.~\ref{fig:architect}. All the antennas at the TX are connected to a single RF chain through phase shifters. By controlling the phase shifts, the transmit beam can be steered along the elevation and azimuth  dimensions. In this work, the set of possible phase shifts is restricted to a kronecker codebook \cite{kroncodebook}, i.e., the phase shift matrix applied to the UPA can always be decomposed as an outer product of two phase shift vectors. The decomposition is practical and allows our solution to be applied to MIMO systems with uniform linear array of antennas at the TX and the RX \cite{exhhier}. The objective of transmit beam alignment is to estimate the best set of phase shifts at the TX that maximize the received SNR. Although the RX can have multiple antennas, it is assumed to operate with a single RF chain in a fixed quasi-omnidirectional mode during beam training. Our assumption can be justified as it is the case with beam training in the IEEE 802.11ad standard \cite{11ad}.  
\par The baseband signal at the TX is up-converted to a mmWave carrier frequency $f_1$ and is transmitted through the UPA. The transmitted RF signal propagates through the wireless channel and is downconverted at the RX using a carrier frequency $f_2$, that slightly differs from $f_1$. The absolute difference, i.e., $\left| f_2-f_1 \right|$ is typically in the order of several parts per millions (ppms) of the carrier frequency $f_1$. Due to the high carrier frequencies at mmWave, even such small differences can result in significant phase change across the channel measurements in a short duration. In this work, we ignore the effect of stochastic phase noise and focus on the impact of CFO. To validate this assumption, a short-packet transmission scheme \cite{kiranchannel} is used so that the phase error induced by the random phase noise process does not change significantly across the channel measurements.    
\par Now, we describe the discrete time baseband equivalent model for the narrowband beamforming system. For a beam training slot of duration $T$ seconds, we define the digital domain CFO as $\epsilon= 2\pi \left(f_1-f_2 \right) T$. In the $n^{\mathrm{th}}$ beam training slot, let $\mathbf{d}_n \in \mathbb{C}^{N}$ and $\mathbf{b}_n \in \mathbb{C}^{N}$ be unit norm phase shift vectors corresponding to the azimuth and elevation directions. The entries of these vectors are constrained to have the same magnitude, while their phase can take values from a finite set depending on the resolution of the phase shifters. By the kronecker codebook assumption, the phase shift matrix applied to the UPA at the TX during the $n^{\mathrm{th}}$  beam training slot is $\mathbf{b}_n\mathbf{d}^T_n$. Let $t[n]$ denote the unit norm pilot symbol transmitted by the TX during the $n^{\mathrm{th}} $ slot. Let $\mathbf{H} \in \mathbb{C}^{N \times N}$ denote the discrete time baseband equivalent of the narrowband channel between the UPA at the TX and the RX. Our system model does not incorporate the antenna elements at the RX as the RX performs quasi-omnidirectional reception using a fixed beam pattern throughout beam training. Assuming perfect frame timing synchronization, the received signal in the $n^{\mathrm{th}}$ beam training slot can be expressed as 
\begin{align}
\tilde{y}[n]&=e^{ \mathrm{j}\epsilon n} \left\langle \mathbf{H}, \mathbf{b}_n \mathbf{d}^{T}_n\right\rangle t[n] + \tilde{v}[n]\\
&=\mathbf{b}^{\ast}_n \mathbf{H} \overline{\mathbf{d}}_n e^{\mathrm{j} \epsilon n} t[n] + \tilde{v}[n],
\end{align}
where $\tilde{v}[n]\sim \mathcal{CN}\left(0,\sigma^2\right)$ is additive white Gaussian noise.  As the pilot symbol $t [n]$ is known to the RX, we redefine the received samples as $y[n]=\overline{t}[n]\tilde{y}[n]$. The channel measurement for the $n^{\mathrm{th}}$ symbol duration is then 
\begin{equation}
y[n]=\mathbf{b}^{\ast}_n \mathbf{H} \overline{\mathbf{d}}_n e^{ \mathrm{j} \epsilon n} + v[n],
\label{eq:meas}
\end{equation}
where $v[n]$ has the same statistics as that of $\tilde{v}[n]$ since $t[n]$ is unit norm. The measurements in \eqref{eq:meas} are noisy projections of the channel matrix perturbed by phase errors due to CFO. Furthermore, the phase errors vary across different observations and prevent direct application of several existing channel estimation and beam training algorithms. 
\subsection{Channel model}
We consider a geometric-ray-based model for the narrowband mmWave channel when the RX performs omnidirectional reception \cite{tsewicomm}. Let $\gamma_{k}$, $\theta_{e,k}$, and $\theta_{a,k}$ denote the complex gain, elevation angle-of-departure, and azimuth angle-of-departure of the $k^{\mathrm{th}}$ ray. 
At this point, we do not make any assumption on these parameters and they can come from any distribution.
With $\omega_{e,k}=\pi\,\mathrm{sin}\, \theta_{e,k} \mathrm{sin}\, \theta_{a,k}$, $\omega_{a,k}=\pi \, \mathrm{sin}\, \theta_{e,k} \mathrm{cos}\, \theta_{a,k}$, and the Vandermonde vector
\begin{equation}
\mathbf{a}_N\left(\Delta \right)=\left[1\,, e^{\mathrm{j} \Delta}\,, e^{\mathrm{j} 2\Delta}\,, \cdots\,, e^{\mathrm{j} (N-1)\Delta}\right]^{T},
\end{equation}
the wireless channel for a half wavelength spaced UPA in the baseband is given by
\begin{equation}
\mathbf{H}=\sum_{k=1}^{K}\gamma_{k}\mathbf{a}_N \left(\omega_{e,k}\right)\mathbf{a}_N^{T}\left(\omega_{a,k}\right).
\label{eq:nbchannel}
\end{equation}
The channel is comprised of $K$ rays that are characterized by their gain and the angle of departure. 
\par Due to the propagation characteristics of the environment at mmWave frequencies, the channel is approximately sparse when expressed in an appropriate basis \cite{heathoverview}. For instance, the 2D-DFT basis is often chosen for the sparse representation of $\mathbf{H}$ \cite{beamsparse}. Let $\mathbf{X} \in \mathbb{C}^{N \times N}$ denote the 2D-DFT of $\mathbf{H}$ such that   
\begin{equation}
\mathbf{H}= \mathbf{U}^{\ast}_{N} \mathbf{X} \mathbf{U}^{\ast}_{N}.
\label{eq:mimoangledom}
\end{equation}   
The matrix $\mathbf{H}$ is the antenna domain channel and $\mathbf{X}$ is called as the beamspace channel. 
The entries of $\mathbf{X}$ are the components of $\mathbf{H}$ along the azimuth and elevation directions defined by the 2D-DFT dictionary. In practice, $\mathbf{X}$ is approximately sparse as the true angles of departure may not align with those defined by the 2D-DFT dictionary. For our analysis, we consider $\mathbf{X}$ to be perfectly sparse while our simulation results are for the realistic case where $\mathbf{X}$ is approximately sparse.
\section{Breaking the hardware barrier : Efficient CS for analog beamforming} \label{sec:RIPguaran}
\par Most CS-based algorithms for channel sparsity-aware beam training \cite{mo2017hybrid, kiranchannel,cschest} use IID random phase shifts to obtain channel measurements. Although random phase shift-based CS matrices work in practice, it is possible that other phased-array compatible CS matrices can perform better beam alignment.  
%The Restricted Isometry Property (RIP)  of corresponding CS measurement matrices, however, is not known. Although  partial DFT CS matrices are known to satisfy the RIP \cite{gaussrip}, implementing them requires turning off antennas which cannot be realized in phased arrays. Similarly, the well known IID Gaussian CS matrices cannot be realized in phased arrays. 
In this section, we assume perfect synchronization, i.e., $\epsilon=0$, to design efficient CS measurement matrices. 
\par  Hardware constraints in phased arrays are the bottleneck that does not allow the realization of several efficient CS matrices that satisfy the restricted isometry property (RIP). For $\epsilon=0$, the channel measurement in \eqref{eq:meas} is a projection of the channel matrix onto a beam training matrix, i.e., $\mathbf{b}_n \mathbf{d}^{T}_n$. A standard result in CS states that the ``best'' way to estimate $\mathbf{H}$ is to sense it using a basis that is maximally incoherent with the sparsity basis \cite{csintro}. It is also well known that the canonical basis, with $N \times N$ matrices of the form $\mathbf{e}_k \mathbf{e}^T_{\ell}$, is maximally incoherent \cite{csintro,gaussrip} with the 2D-DFT basis, which is used for a sparse representation of $\mathbf{H}$. The CS matrix that results from projecting a matrix with a sparse 2D-DFT representation onto fewer elements of the canonical basis is called as a partial 2D-DFT CS matrix. Projecting the channel onto the canonical basis elements, however, requires turning off $N^2-1$ antennas and cannot be implemented with phased arrays. As a result, partial 2D-DFT CS matrices that are known to satisfy the RIP cannot be realized directly in phased arrays. Similarly, IID Gaussian and IID Bernoulli CS matrices cannot be realized in phased arrays. To understand the realizability of such CS matrices, we define $\mathbf{A}_{\mathrm{CS}}$ as the CS matrix corresponding to \eqref{eq:meas} for $\epsilon=0$.  From \eqref{eq:meas} and \eqref{eq:mimoangledom}, it can be observed that the  $n^{\mathrm{th}}$ row of $\mathbf{A}_{\mathrm{CS}}$ is $\mathbf{A}_{\mathrm{CS}}(n,:)=(\mathbf{d}^{\ast}_n\mathbf{U}^{\ast}_N) \otimes (\mathbf{b}^{\ast}_n\mathbf{U}^{\ast}_N)$ \cite{cschest}. Now, realizing IID Bernoulli or IID Gaussian entries in $\mathbf{A}_{\mathrm{CS}}$  is difficult as it requires the design of unimodular vectors $\mathbf{b}_n$ and $\mathbf{d}_n$ such that the kronecker product of their Fourier transforms satisfy the desired distribution.
\par For efficient CS-based beam alignment, we use circulantly shifted versions of a fixed Zadoff-Chu sequence \cite{zadoffchu} for phase shifts along the azimuth and elevation dimensions of the UPA. ZC sequences have been extensively used in wireless systems along the time or frequency dimensions \cite{mo2016channel}. Recent work on omni-directional precoding has used ZC sequences in linear \cite{omni_pre_zc1} and planar phased arrays  \cite{omni_pre_zc2} to generate wide beams. There is limited work, however, on the use of ZC sequences in space for CS-based beam alignment. To the best of our knowledge, \cite{ZC_CS_first} is the only work that used ZC sequences in phased arrays for CS-based channel estimation. The optimality of ZC-based CS matrices in terms of incoherence or RIP, however, was not studied in \cite{ZC_CS_first}. Furthermore, \cite{ZC_CS_first} assumed perfect synchronization like most of the prior work on CS-based channel estimation. Random circulantly shifted ZC sequences were used in \cite{convCS} for compressed sensing of signals that are sparse in the DFT-basis. Our interpretation of shifted ZC-based CS for such signals, however, is different from that in \cite{convCS}. Specifically, we propose novel concepts of spectral mask, virtual switching and trajectory for shifted ZC-based CS. We also optimize the sequence of circulant shifts for robustness to phase errors due to CFO. 
\par We use $\mathbf{z}\in \mathbb{C}^{N}$ to denote the fixed ZC sequence, also defined as the core ZC sequence. The  $k^{\mathrm{th}}$ element of $\mathbf{z}$ is given as \cite{zadoffchu}
\begin{equation}
z\left[k\right]=\begin{cases}
\begin{array}{c}
\frac{1}{\sqrt{N}}\mathrm{exp}\left(\mathrm{j} \frac{\pi u k\left(k+1\right)}{N}\right),\,\,\,\,\,\mathrm{if}\,N\,\mathrm{is\,odd}\\
\frac{1}{\sqrt{N}}\mathrm{exp}\left(\mathrm{j} \frac{\pi u k^{2}}{N}\right),\,\,\,\,\,\,\,\,\,\,\,\,\,\,\,\,\mathrm{if}\,N\,\mathrm{is\,even}
\end{array}\end{cases}.
\end{equation} 
The integers $u$ and $N$ of the ZC sequence have to be co-prime for $\mathbf{z}$ to satisfy the Zero periodic Auto Correlation (ZAC) property \cite{zadoffchu}. The ZAC property translates to basis orthogonality of all the $N$ possible shifted ZC sequences. Furthermore, as the DFT of a ZC sequence is another ZC sequence \cite{ZCZC}, $\mathbf{z}$ has a uniform DFT in the magnitude sense. As every canonical basis vector also has a uniform DFT, shifted ZC sequences can be used to emulate the canonical basis set. We explain the canonical basis emulation effect of ZC sequences in Sec. \ref{sec:smvs}.
\subsection{Spectral mask and virtual switching} \label{sec:smvs}
In this section, we show how the phased array can be interpreted as a virtual switching architecture when ZC sequences are used along the spatial dimension. Consider a phase shift matrix $\mathbf{b}_n\mathbf{d}^T_n$, such that $\mathbf{b}_n$ and $\mathbf{d}_n$ are generated using $r[n] $ and $c[n]$ right circulant shifts of the core ZC sequence $\mathbf{z}$. The integers  $r [n]\in \mathcal{I}_N$ and $c[n]\in \mathcal{I}_N$ control the phase of the DFTs of $\mathbf{b}_n$ and $\mathbf{d}_n$. We use $\mathbf{\Lambda}_\mathbf{z}$ to denote a diagonal matrix containing the scaled DFT of $\mathbf{z}$ on its diagonal, i.e., $\mathbf{\Lambda}_\mathbf{z}=\sqrt{N}\mathrm{diag}\left(\mathbf{U}_N\mathbf{z}\right)$. Notice that each diagonal entry of $\mathbf{\Lambda}_{\mathbf{z}}$ has unit magnitude by the uniform DFT property of the ZC sequence. We define a right circulant delay matrix $\mathbf{J}\in \mathbb{R}^{N \times N}$ such that its first row is $\left(0,1,0,...,0\right)$. The vector $\mathbf{Jz}$ is then $(z[2],z[3],....,z[N],z[1])^{T}$. We define an $N \times N$ matrix containing all possible circulant shifts of $\mathbf{z}$ as $\mathbf{Z}=\left[\mathbf{z},\mathbf{Jz},\mathbf{J}^{2}\mathbf{z},...,\mathbf{J}^{N-1}\mathbf{z}\right]$. As $\mathbf{Z}$ is circulant, it can be diagonalized by the DFT as $\mathbf{U}_N \mathbf{Z} \mathbf{U}^{\ast}_N=\mathbf{\Lambda}_{\mathbf{z}}$ or $\mathbf{U}_N \mathbf{Z}=\mathbf{\Lambda}_{\mathbf{z}}\mathbf{U}_N$. From the structure of $\mathbf{Z}$, we have $\mathbf{b}_n=\mathbf{Z}\mathbf{e}_{r[n]}$. Therefore, the DFT of $\mathbf{b}_n$ is
\begin{equation}
\mathbf{U}_N \mathbf{b}_n= \mathbf{\Lambda}_{\mathbf{z}}\mathbf{U}_N \mathbf{e}_{r[n]}.
\end{equation}
Similarly, we have $\mathbf{U}_{N}\mathbf{d}_{n}=\mathbf{\Lambda_{\mathbf{z}}}\mathbf{U}_{N}\mathbf{e}_{c[n]}$. For $\epsilon=0$, we rewrite \eqref{eq:meas} using the 2D-DFT representation of $\mathbf{H}$ in \eqref{eq:mimoangledom} as 
\begin{equation}
y[n]=\mathbf{b}_{n}^{\ast}\mathbf{U}^{\ast}_{N}\mathbf{X}\mathbf{U}^{\ast}_{N}\overline{\mathbf{d}}_{n}+v[n].
\label{eq:normalcs}
\end{equation}
When $\mathbf{b}_n$ and $\mathbf{d}_n$ are $r[n]$ and $c[n]$ circulantly shifted versions of $\mathbf{z}$, the channel measurements in \eqref{eq:normalcs} can be expressed as 
\begin{align}
y[n]&=\left(\mathbf{U}_{N}\mathbf{b}_{n}\right)^{\ast}\mathbf{X} \overline{\mathbf{U}_{N}\mathbf{d}_{n}}+v[n]\\
&=\mathbf{e}_{r[n]}^{T}\mathbf{U}_{N}^{\ast}\mathbf{\Lambda_{\mathbf{z}}^{\ast}}\mathbf{X}\mathbf{\Lambda_{\mathbf{z}}^{\ast}}\mathbf{U}_{N}^{\ast}\mathbf{e}_{c[n]}+v[n].
\label{eq:transzcsamp}
\end{align}
We define $\mathbf{S}=\mathbf{\Lambda^{\ast}_{\mathbf{z}}}\mathbf{X}\mathbf{\Lambda^{\ast}_{\mathbf{z}}}$ as the spectral masked version of the true beamspace matrix $\mathbf{X}$. With this definition, it can be observed from \eqref{eq:transzcsamp} that the noiseless $y[n]$ is a $(r[n],c[n])$ coordinate of the inverse 2D-DFT of $\mathbf{S}$. As $\mathbf{\Lambda}_{\mathbf{z}}$ is diagonal, $\mathbf{S}$ is also sparse with the same locations of sparsity as that of $\mathbf{X}$. Because all the diagonal elements of $\mathbf{\Lambda_{\mathbf{z}}}$ have unit magnitude, the transformation from $\mathbf{X}$ to $\mathbf{S}$ is perfectly conditioned and invertible. As a result, estimating the masked beamspace matrix $\mathbf{S}$ is equivalent to estimating the true beamspace matrix $\mathbf{X}$. This equivalence allows us to define the spectral mask concept, i.e., the channel measurements acquired with shifted ZC sequences can be interpreted as partial 2D-DFT samples of a masked beamspace. 
\par The virtual switching concept defines a virtual channel whose entries can be directly sampled using ZC-based projections of the true channel. Analogous to the definition of $\mathbf{H}$ in \eqref{eq:mimoangledom}, we define $\mathbf{G}=\mathbf{U}^{\ast}_N \mathbf{S} \mathbf{U}^{\ast}_N$ as the virtual channel matrix. Using these definitions, \eqref{eq:transzcsamp} is then  
\begin{align}
y[n]&=\mathbf{e}_{r[n]}^{T}\mathbf{U}_{N}^{\ast}\mathbf{S}\mathbf{U}_{N}^{\ast}\mathbf{e}_{c[n]}+v[n] \label{eq:pretranszc}\\
&=\mathbf{e}_{r[n]}^{T} \mathbf{G} \mathbf{e}_{c[n]} +v[n] \label{eq:pretranszc2}\\
&=\mathbf{G}\left(r[n],c[n]\right)+v[n].
\label{eq:transformzc}
\end{align}
On the one hand, antenna switching cannot be performed in the phased array architecture, i.e., the entries of $\mathbf{H}$ cannot be sampled directly. On the other hand, it follows from \eqref{eq:pretranszc} that the entries of the virtual channel matrix can be sampled directly.  Such sampling is achieved by applying the shifted ZC-based training to the phased array. As the map between $\mathbf{G}$ and $\mathbf{H}$ is unique and sparsity preserving, it can be concluded from \eqref{eq:normalcs} and \eqref{eq:transformzc} that circulantly shifted ZC sequences emulate the canonical basis. The basis emulation can also be interpreted as a virtual transform that maps the phased array architecture to a switching-based architecture. The circulant shifts of the ZC sequence along the elevation and the azimuth, i.e., $\left(r[n],c[n]\right)$, would represent the 2D coordinates of the sampled antenna in the virtual switching architecture.  
\par Now, we show how virtual switching is advantageous over physical antenna switching when there is a limit on the maximum transmit power per-antenna. To explain our argument, we consider $\mathbf{H}=\mathbf{1}\mathbf{1}^T$ as the all-ones channel matrix. We also assume a maximum transmit power of $1/N$ units per-antenna. The $(0,0)$ coordinate of the true channel, i.e., $\mathbf{H}(0,0)$, can be acquired by using $\mathbf{b}_n=[1/ \sqrt{N},0,0,...,0 ]$ and $\mathbf{d}_n=[1/ \sqrt{N},0,0,...,0]$. For this choice of $\mathbf{b}_n$ and $\mathbf{d}_n$, 
it can be observed from \eqref{eq:normalcs} that the channel measurement is distributed as $\mathcal{CN}(1/N, \sigma^2 )$ and its associated SNR is $\mathrm{SNR}_{\mathrm{sw}}= 1/(N \sigma)^2$. To sample $\mathbf{G}(0,0)$, the $(0,0)$ coordinate of the virtual channel, $\mathbf{b}_n=\mathbf{z}$ and $\mathbf{d}_n=\mathbf{z}$ are applied to the phased array. The projection of the channel onto $\mathbf{b}_n\mathbf{d}^T_n$, i.e., $\mathbf{b}^{\ast}_n \mathbf{H}\overline{\mathbf{d}}_n$, is $|\mathbf{z}^{\ast}\mathbf{1}|^{2}$ when $\mathbf{H}$ is an all-ones matrix. Using the uniform DFT-property of the ZC sequence \cite{ZCZC}, it can be shown that $|\mathbf{z}^{\ast}\mathbf{1}|^2$, the scaled energy in the DC-component of $\mathbf{z}$, is $1$. As a result, the channel measurement corresponding to the virtual channel entry is distributed as $\mathcal{CN}(1,\sigma^2)$ and its associated SNR is $\mathrm{SNR}_{\mathrm{vsw}}= 1/ \sigma^2$. The SNR of the channel measurements obtained using ZC sequences is higher when compared to antenna switching because ZC sequences use all the antennas unlike switching. As $\mathrm{SNR}_{\mathrm{vsw}}=N^2 \mathrm{SNR}_{\mathrm{sw}}$, it can be concluded that virtual switching is power efficient over physical antenna switching by a factor of $N^2$ when there is a power constraint per-antenna.
\subsection{Analysis of CS algorithms with the proposed ZC-based training} 
The spectral mask concept aids fast CS algorithms and allows analysis of CS-based channel estimation. Let $M$ be the number of channel measurements obtained using distinct circulant shifts of the ZC sequence along the two dimensions of the UPA. Due to the sparse nature of $\mathbf{X}$, it is possible to reconstruct $\mathbf{H}$ with $M \ll N^2$ channel measurements. With the virtual switching interpretation, a simple CS acquisition strategy is to sample $M$ coordinates of $\mathbf{G}$ uniformly at random from an $N\times N$ grid without repetition. Let $\Omega$ denote the set of $M$ sampled 2D coordinates $\left\{ \left(r[n],c[n]\right) \right\} _{n=0}^{M-1}$, with $r[n], c[n] \in \mathcal{I}_N$. We use $\mathcal{P}_\Omega : \mathbb{C}^{N\times N}\rightarrow\mathbb{C}^{M}$ to denote the sampling operator that returns the entries of an $N\times N$ matrix at the locations in $\Omega$. For the $n^{\mathrm{th}}$ beam training measurement, the TX applies $\mathbf{b}_n\mathbf{d}^T_n$  to the UPA, where $\mathbf{b}_n$ and $\mathbf{d}_n$ are $r[n]$ and $c[n]$ circulantly shifted versions of the core ZC sequence $\mathbf{z}$. Let $\mathbf{y}=\left\{ y[n]\right\} _{n=0}^{M-1}$ denote the vector of channel measurements obtained by switching according to $\Omega$, i.e., $\mathbf{y}=\mathcal{P}_{\Omega}\left(\mathbf{G}\right)+\mathbf{v}$. Using \eqref{eq:pretranszc} and \eqref{eq:transformzc}, $\mathbf{y}$ can be interpreted as a sub-sampled 2D-DFT of the sparse masked beamspace matrix $\mathbf{S}$. The spectral mask concept thereby allows the application of CS guarantees for sub-sampled 2D-DFT matrices \cite{DFTCS} to the beam-alignment problem.
\par Now, we provide reconstruction guarantees for sparse channel estimation when random circulantly shifted ZC sequences are used for beam training. The masked beamspace $\mathbf{S}$ can be estimated from the channel measurements using the following convex program \cite{csintro}
\begin{equation}
\hat{\mathbf{S}}=\begin{array}{c}
\mathrm{arg\,min\,\,}\left\Vert \mathbf{W}\right\Vert _{1}\\
\mathrm{s.t\,}\left\Vert \mathbf{y}-\mathcal{P}_{\Omega}\left(\mathbf{U}_{N}^{\ast}\mathbf{W}\mathbf{U}_{N}^{\ast}\right)\right\Vert _{2}\leq\sqrt{M}\sigma
\end{array}.
\label{eq:solvel1}
\end{equation}
The objective of \eqref{eq:solvel1}, i.e., $\left\Vert \mathbf{W} \right\Vert _{1}=\sum_{k,\ell}\left|\mathbf{W}\left(k,\ell\right)\right|$, is a convex function that encourages sparse solutions. We substitute the parameters of the sparse masked beamspace recovery problem in \eqref{eq:solvel1}, in Theorems 1 and 3 of \cite{kramer} to obtain recovery guarantees on the channel estimate. The matrix $\left( \mathbf{S}\right)_k$ is used to denote the $k$ sparse representation of $\mathbf{S}$, and is obtained from $\mathbf{S}$ by retaining the $k$ largest entries in magnitude and setting the rest to $0$.
\begin{theorem} 
For a fixed constant $\gamma \in \left(0,1\right)$, if $M\geq Ck\,\mathrm{max}\left\{ 2\mathrm{log}^{3}(2k)\,\mathrm{log}(N),\,\mathrm{log}(\gamma^{-1})\right\}$ then the solution to \eqref{eq:solvel1} satisfies 
\begin{equation}
\bigl\Vert \mathbf{S}-\hat{\mathbf{S}}\bigl\Vert _{F}\leq C_{1}\frac{\left\Vert \mathbf{S}-\left(\mathbf{S}\right)_{k}\right\Vert _{1}}{\sqrt{k}}+C_{2}N\sigma,
\label{eq:CSbound}
\end{equation}
with a probability of at least $1-\gamma$. The constants $C, C_1$ and $C_2$ are independent of all the other parameters.
\label{theoremorig}
\end{theorem}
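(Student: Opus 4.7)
The plan is to recast the sparse masked-beamspace recovery \eqref{eq:solvel1} into the canonical form of compressed sensing from a randomly subsampled bounded orthonormal system (BOS), so that Theorems 1 and 3 of the cited reference apply essentially out of the box. Vectorizing $\mathbf{S}$ as $\mathbf{s}\in\mathbb{C}^{N^{2}}$, the noiseless part of \eqref{eq:pretranszc} reads $\mathbf{a}_{n}^{T}\mathbf{s}$, where $\mathbf{a}_{n}$ is the Kronecker product of the rows of $\mathbf{U}^{\ast}_{N}$ indexed by $r[n]$ and $c[n]$. Since $|\mathbf{U}^{\ast}_{N}(i,j)|=1/\sqrt{N}$, every entry of $\mathbf{a}_{n}$ has magnitude exactly $1/N$, so stacking these vectors row-wise identifies the measurement operator as a random $M$-row subsample of the 2D-DFT orthobasis, which is a BOS with optimal constant $K=1$. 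Moreover, $\mathbf{s}$ inherits the sparsity level $k$ of $\mathbf{X}$ because $\mathbf{S}=\mathbf{\Lambda}_{\mathbf{z}}^{\ast}\mathbf{X}\mathbf{\Lambda}_{\mathbf{z}}^{\ast}$ with $\mathbf{\Lambda}_{\mathbf{z}}$ diagonal and unimodular.

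With the problem in BOS form, I would first invoke Theorem 1 of the cited reference to establish the restricted isometry property of order $2k$ with probability at least $1-\gamma$, given $M\geq Ck\,\max\{\log^{3}(2k)\log(N^{2}),\,\log(\gamma^{-1})\}$; using $\log(N^{2})=2\log(N)$ and absorbing constants into $C$ reproduces the sample-complexity hypothesis of the present theorem verbatim. Next, Theorem 3 of the same reference gives the deterministic stability bound $\|\mathbf{s}-\hat{\mathbf{s}}\|_{2}\leq C_{1}\|\mathbf{s}-\mathbf{s}_{k}\|_{1}/\sqrt{k}+C_{2}\eta$ for the quadratically constrained $\ell_{1}$ program, where $\eta$ is the $\ell_{2}$ noise budget in the BOS-normalized measurements. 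Since vectorization is an isometry, $\|\mathbf{S}-\hat{\mathbf{S}}\|_{F}=\|\mathbf{s}-\hat{\mathbf{s}}\|_{2}$ and the matrix best-$k$-term $\ell_{1}$ error equals that of its vectorization, so the bound lifts to the matrix form \eqref{eq:CSbound}.

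The one subtle point is the noise scaling. Our feasibility constraint is posed for the operator $\mathcal{P}_{\Omega}(\mathbf{U}^{\ast}_{N}\mathbf{W}\mathbf{U}^{\ast}_{N})$, whose row entries have magnitude $1/N$, whereas the BOS normalization scales those entries by $1/\sqrt{M}$; our measurement matrix therefore equals $\sqrt{M}/N$ times the BOS-normalized matrix, and rescaling the measurement equation by $N/\sqrt{M}$ amplifies the noise bound from $\sqrt{M}\sigma$ to $\eta=N\sigma$, producing precisely the $C_{2}N\sigma$ term in \eqref{eq:CSbound}. The main obstacle is therefore not the compressed sensing machinery itself, which is imported black-box, but the careful verification of the two structural ingredients that enable that import: that $\mathcal{P}_{\Omega}\circ(\mathbf{U}^{\ast}_{N}\otimes\mathbf{U}^{\ast}_{N})$ really is a uniform random row subsample of a BOS with $K=1$, and that $\mathbf{X}\mapsto\mathbf{S}$ is an isometric, sparsity-preserving map. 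Both facts follow directly from the spectral-mask discussion preceding the theorem, so the proof reduces to the citation of Theorems 1 and 3 together with the rescaling argument above.
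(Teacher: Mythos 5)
Your proposal is correct and follows essentially the same route as the paper, which likewise obtains Theorem~\ref{theoremorig} by casting the measurements as a uniformly subsampled partial 2D-DFT of the sparse masked beamspace (via the spectral-mask/virtual-switching equivalence) and invoking Theorems 1 and 3 of the cited reference. In fact you supply more detail than the paper does—the explicit BOS normalization, the $\log(N^{2})=2\log(N)$ bookkeeping, and the $N/\sqrt{M}$ rescaling that turns the $\sqrt{M}\sigma$ feasibility budget into the $C_{2}N\sigma$ term—all of which are consistent with the paper's brief substitution argument.
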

The beam space estimate $\hat{\mathbf{X}}$ and the corresponding channel estimate $\hat{\mathbf{H}}$ can be computed as $\hat{\mathbf{X}}=\overline{\mathbf{\Lambda}}^{-1}_{\mathbf{z}}\hat{\mathbf{S}}\overline{\mathbf{\Lambda}}^{-1}_{\mathbf{z}}$ and $\hat{\mathbf{H}}=\mathbf{U}^{\ast}_N \hat{\mathbf{X}}\mathbf{U}^{\ast}_N$. Due to the norm preserving nature of the phase map from $\mathbf{S}$ to $\mathbf{X}$, we have $\bigl\Vert \mathbf{X}-\hat{\mathbf{X}}\bigl\Vert _{F}=\bigl\Vert \mathbf{S}-\hat{\mathbf{S}}\bigl\Vert _{F}$ and $\bigl\Vert \mathbf{X}-(\mathbf{X})_k \bigl\Vert_{1}=\bigl\Vert \mathbf{S}-(\mathbf{S})_k\bigl\Vert _{1}$. Furthermore, as $\mathbf{H}$ and $\mathbf{X}$ are related through a unitary 2D-DFT, we have  $\bigl\Vert \mathbf{H}-\hat{\mathbf{H}}\bigl\Vert _{F}=\bigl\Vert \mathbf{X}-\hat{\mathbf{X}}\bigl\Vert _{F}$. Putting these observations together, the error in the estimated channel can be bounded as 
\begin{equation}
\bigl\Vert \mathbf{H}-\hat{\mathbf{H}}\bigl\Vert _{F}\leq C_{1}\frac{\bigl\Vert \mathbf{X}-\left(\mathbf{X}\right)_{k}\bigl\Vert _{1}}{\sqrt{k}}+C_{2}N\sigma 
\label{eq:csfinal}
\end{equation}
for the conditions in Theorem \ref{theoremorig}. As seen in \eqref{eq:csfinal}, the CS algorithm in \eqref{eq:solvel1} can be used to recover a sparse approximation of the beamspace channel ${\mathbf{X}}$ with just $\mathcal{O}(\mathrm{log}N)$ channel measurements using the proposed shifted ZC-based training.  
\par The proposed circulantly shifted ZC sequences are used differently from those in \cite{mo2016channel} where the shifts were performed along the time dimension to reduce the complexity of the channel estimation algorithm. Our work exploits circulant shifts along the spatial dimension to design phased array compatible training sequences that minimize the number of CS measurements, allow recovery guarantees, and also aid fast implementation. Such guarantees and complexity reduction are otherwise difficult to achieve for the commonly used random IID phase shift sequences.
\section{Impact of CFO on beam alignment} \label{sec:CFOimpactfull}
\par In this work, we study the impact of CFO on compressive beam training and design beam training sequences that achieve robustness to CFO. For tractability, we restrict our design within the class of sequences generated using circulant shifts of the ZC sequence $\mathbf{z}$ along the azimuth and elevation dimensions. It can be noticed from \eqref{eq:transformzc} that the circulant shift determines the coordinate of the virtual channel matrix to be sampled. For a sequence of $M$ beam training measurements obtained using circulant shifts $\left\{ (r[n],c[n])\right\} _{n=0}^{M-1}$, we define a sampling trajectory as a curve that sequentially traverses through the coordinates $\left\{ (r[n],c[n])\right\} _{n=0}^{M-1}$ on an $N \times N$ grid. Some examples of trajectories for $M=N^2$ are shown in Fig. \ref{fig:rowtraj} and Fig. \ref{fig:blktraj}. We show that the impact of CFO on the beamspace estimate is determined by the coordinates sampled by the trajectory and the sequence of traversal. Therefore, training design for low complexity beam training robust to CFO is now a trajectory design problem.
\par In this section, we provide examples of some full-sampling trajectories and analyze the beamspace distortions induced by CFO. Using the insights derived from these examples, we arrive at design requirements for a good trajectory. 
\subsection{Impact of trajectory on beamspace estimate} \label{trajectory_effects}
For a non-zero CFO, the measurements in \eqref{eq:transformzc} can be expressed in terms of the virtual channel $\mathbf{G}$ as
\begin{align}
y[n]&=e^{\mathrm{j} \epsilon n} \mathbf{G}\left(r[n],c[n]\right)+v[n].
\label{eq:canonical}
\end{align}
Irrespective of the choice of the trajectory, the phase errors in the measurements \eqref{eq:canonical} increase linearly along the sampling trajectory. In this section, we consider full-sampling trajectories  that traverse through all the $N^2$ points on the $N \times N$ grid exactly once, i.e., $M=N^2$. 
When CFO is ignored, the virtual channel matrix $\mathbf{G}$ can be estimated by sequentially populating an $N \times N$ matrix with the entries of $\mathbf{y}$ along the trajectory. In such reconstruction, the entries of the estimated $\mathbf{G}$ differ from that of the true $\mathbf{G}$ by a phase error that depends on CFO and the trajectory. For a given trajectory, we use $\mathbf{P}_{\epsilon} \in \mathbb{C}^{N\times N}$ to denote the matrix containing the phase errors induced by CFO. As CFO induces a phase error of $e^{\mathsf{j}m\epsilon}$ in the $m^{\mathrm{th}}$ channel measurement, the entries of $\mathbf{P}_{\epsilon}$ are of the form $e^{ \mathrm{j} m\epsilon}$, where $m \in \mathcal{I}_{N^2}$. Ignoring CFO, the noiseless virtual channel estimate is then $\hat{\mathbf{G}}=\mathbf{G} \odot \mathbf{P}_{\epsilon}$. The corresponding masked beamspace is estimated as $\hat{\mathbf{S}}=\mathbf{U}_N \hat{\mathbf{G}} \mathbf{U}_N$. Using the property that element-wise multiplication of two matrices results in 2D circular convolution of their 2D-DFTs \cite{imageprocess}, we have  $\hat{\mathbf{S}}=\mathbf{S} \circledast \left( \mathbf{U}_N \mathbf{P}_{\epsilon} \mathbf{U}_N \right)$. Therefore, the impact of CFO is determined by the matrix $\mathbf{U}_N \mathbf{P}_{\epsilon} \mathbf{U}_N$, which must be $N \mathbf{e}_{0}\mathbf{e}_{0}^{T}$ in the ideal case, i.e., for zero distortion in the masked beamspace. The original beamspace matrix is estimated by inverting the spectral mask as $\hat{\mathbf{X}}=\overline{\mathbf{\Lambda}}^{-1}_{\mathbf{z}}\hat{\mathbf{S}}\overline{\mathbf{\Lambda}}^{-1}_{\mathbf{z}}$. Using the examples of row sampling and block sampling, we show that CFO induces undesirable beamspace distortions like global shifts and multiple replicas that impact beam alignment. 
\subsubsection{Row sampling}
\begin{figure}[htbp]
\vspace{-5mm}
\subfloat[Row sampling trajectory]{\includegraphics[width=4.25cm, height=4.25cm]{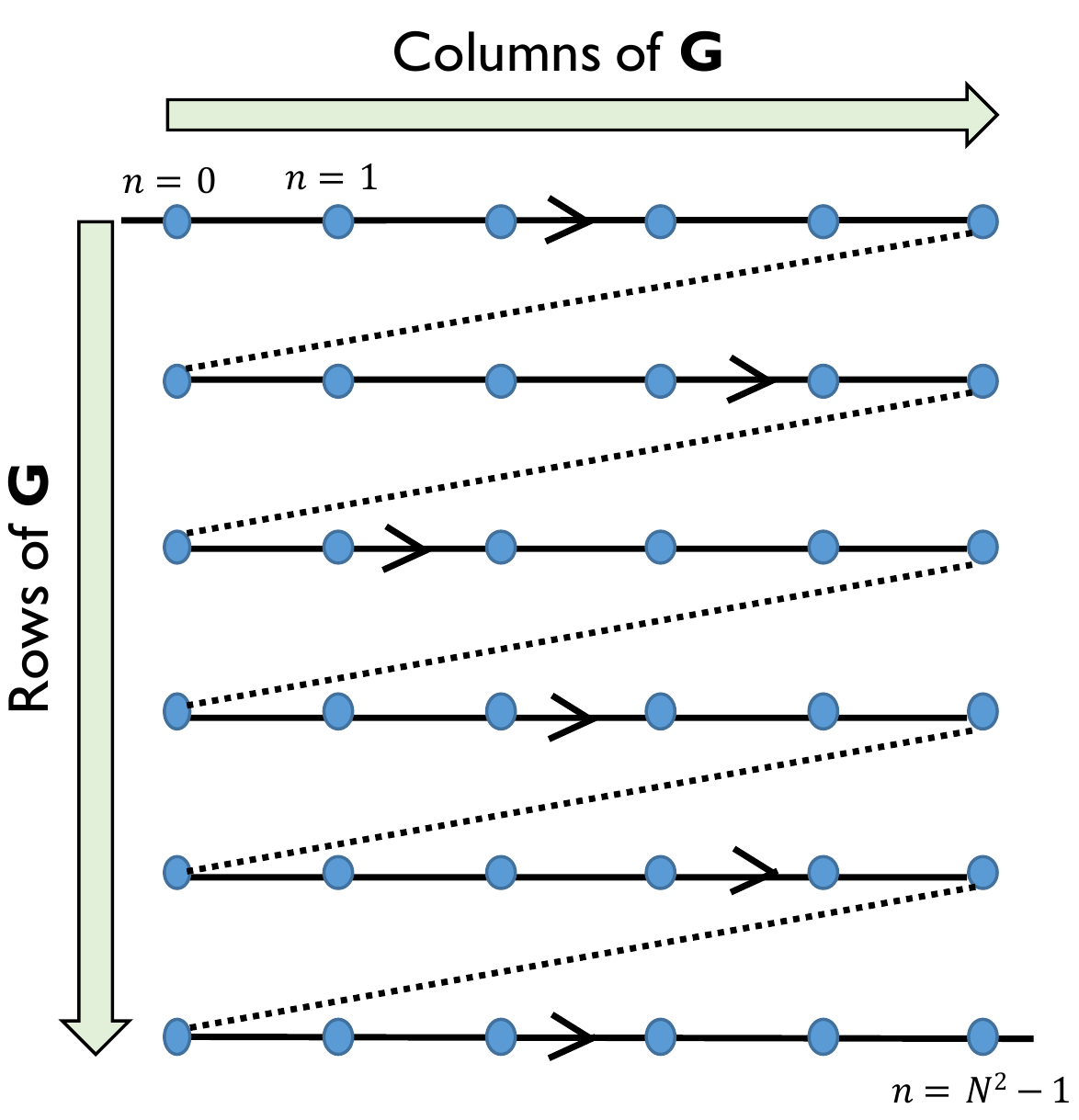}\label{fig:rowtraj}}
\hspace*{\fill}%left bottom right top
\subfloat[Estimated beamspace matrix]{\includegraphics[trim=1cm 0cm 1cm 1cm,clip=true,width=4.25cm, height=4.25cm]{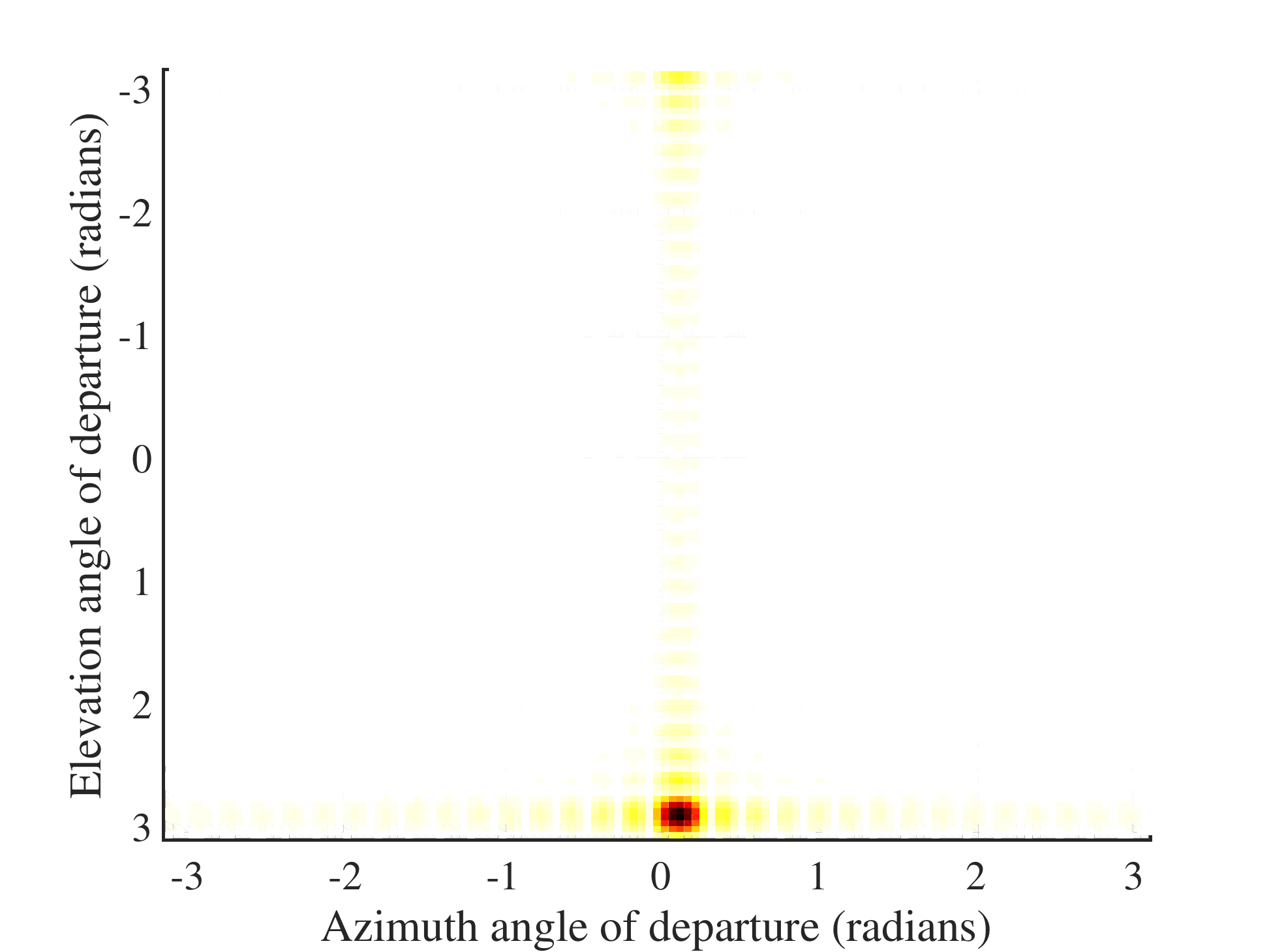}}\label{fig:rowrec}
\caption{ \small CFO induces beamspace shift when the row sampling trajectory is used for beam training. In this example, the original beamspace had a unique non-zero component at $(0,0)$, and $\epsilon$ was $0.09\,\mathrm{rad}$. \normalsize}
\vspace{-2mm}
\end{figure}
The row sampling trajectory, shown in Fig.~\ref{fig:rowtraj}, starts at $\left(0,0 \right)$ and sequentially samples every row of the $N\times N$ virtual channel matrix. To realize such a trajectory with the UPA, $\mathbf{z}$ is circulantly shifted by $r[n]=\lfloor n / N \rfloor$ and $c[n]= n \, \mathrm{modulo}\, N$ along the elevation and the azimuth dimensions. Then, the outer product of the two circulantly shifted vectors is applied to the UPA at the TX to obtain the channel measurement $y[n]$. For the row sampling trajectory, we use $\mathbf{P}^{\mathrm{row}}_{\epsilon} \in \mathbb{C}^{N \times N}$ to represent the matrix containing the phase errors induced by the CFO. When $\mathbf{G}$ is sampled according to the trajectory in  Fig.~\ref{fig:rowtraj}, the induced phase error matrix can be constructed by sequentially populating an $N \times N$ matrix with the elements $\{e^{\mathsf{j}m \epsilon}\}_{m \in \mathcal{I}_{N^2}}$ along the trajectory. In such case, the phase errors induced by CFO increase linearly along each row of $\mathbf{G}$. For the row sampling trajectory, the induced phase error matrix $\mathbf{P}^{\mathrm{row}}_{\epsilon}$ in compact form is $\mathbf{P}^{\mathrm{row}}_{\epsilon}=\mathbf{a}_N\left(N \epsilon \right)\mathbf{a}_N\left(\epsilon \right)^T$, 
and its 2D Fourier transform is concentrated at the frequency coordinate $\left(N \epsilon, \epsilon \right)$. In this case, the estimated virtual channel matrix is a noisy version of $\mathbf{G} \odot \mathbf{P}^{\mathrm{row}}_{\epsilon}$ for a CFO of $\epsilon$. Using the multiplication-convolution duality of the Fourier transform \cite{OPP}, it can be observed that CFO induces a 2D circulant shift of $\left(N \epsilon, \epsilon \right)$ in the masked beamspace. Specifically, $\hat{\mathbf{S}}$ is obtained by circulantly shifting the continuous 2D Fourier transform \cite{imageprocess} of $\mathbf{G}$ by $\left(N \epsilon, \epsilon \right)$ followed by sampling over the $N^2$ frequencies $(2\pi k/N,2\pi\ell/N)$ for $k, \ell \in \mathcal{I}_N$. By inverting the spectral mask, it can be observed that the estimated beamspace and the true beamspace differ by a 2D circulant shift of $(N\epsilon, \epsilon)$.
\par A key observation is that the beamspace shift induced by the row sampling trajectory is linear in the CFO and is higher along the elevation dimension. As the angular resolution of the array is $2 \pi / N$ along each dimension of the beamspace, beamforming with uncorrected beamspace shift can lead to significant loss in the SNR unless $\mathrm{max}\left\{N\epsilon, \epsilon \right\} < \pi / N$. The condition $\epsilon < \pi / N^2$ results in a strict limit on CFO and may not be satisfied in mmWave systems as the number of antennas, i.e., $N^2$, is typically large. Therefore, it is desirable  to design trajectories that minimize the beamspace shifts along both the azimuth and the elevation dimensions. 
\vspace{-1mm}
\subsubsection{Block sampling}
\begin{figure}[htbp]
\subfloat[Block sampling trajectory]{\includegraphics[width=4.25cm, height=4.25cm]{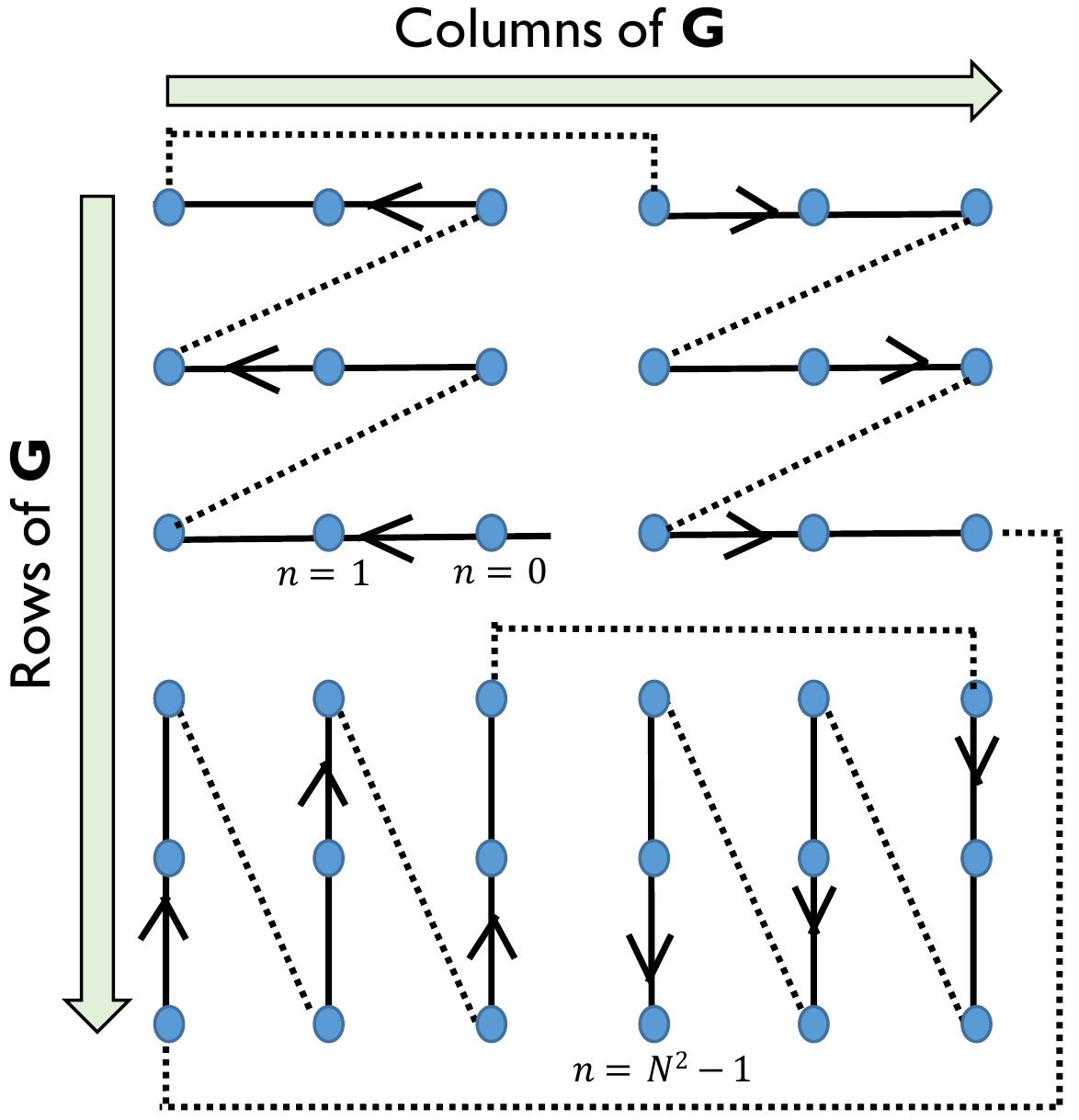}\label{fig:blktraj}} 
\hspace*{\fill}
\subfloat[Estimated beamspace matrix]{\includegraphics[trim=1cm 0cm 1cm 1cm,clip=true,width=4.25cm, height=4.25cm]{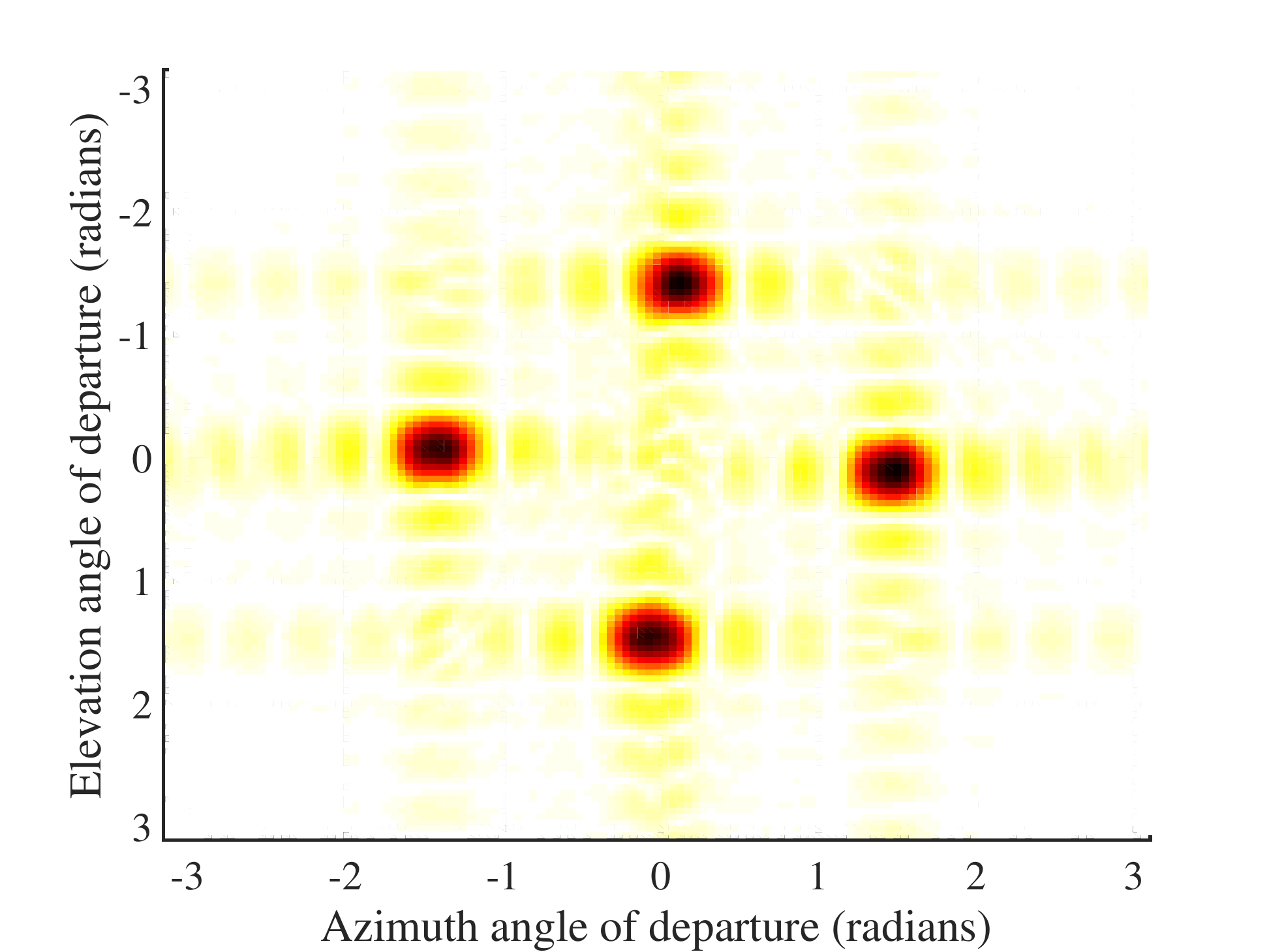}\label{fig:blkrec}}
\caption{\small CFO results in multiple replicas in the estimated beamspace with the block sampling trajectory. In this example, the original beamspace had a unique component at $(0,0)$, and $\epsilon$ was $0.09 \,\mathrm{rad}$.
\normalsize}
\vspace{-2mm}
\end{figure}
We show that CFO can induce multiple replicas in the beamspace estimate using a block sampling trajectory with $M=N^2$. From the structure of the sampling trajectory in Fig.~\ref{fig:blktraj}, the induced phase error matrix can be computed as 
\small
\begin{equation}
\mathbf{P}^{\mathrm{blk}}_{\epsilon}=\left[\begin{array}{cc}
e^{\mathrm{j} \epsilon u_{1}}\mathbf{a}_{\frac{N}{2}}\left(\frac{-N\epsilon}{2}\right)\mathbf{a}_{\frac{N}{2}}^{T}\left(-\epsilon\right) & e^{\mathrm{j} \epsilon u_{2}}\mathbf{a}_{\frac{N}{2}}\left(\frac{N\epsilon}{2}\right)\mathbf{a}_{\frac{N}{2}}^{T}\left(\epsilon\right)\\
e^{\mathrm{j} \epsilon u_{3}}\mathbf{a}_{\frac{N}{2}}\left(-\epsilon\right)\mathbf{a}_{\frac{N}{2}}^{T}\left(\frac{N\epsilon}{2}\right) & e^{\mathrm{j} \epsilon u_{4}}\mathbf{a}_{\frac{N}{2}}\left(\epsilon\right)\mathbf{a}_{\frac{N}{2}}^{T}\left(\frac{-N\epsilon}{2}\right)
\end{array}\right],
\label{eq:blocksamp}
\end{equation}
\normalsize
where $u_1={N^2} /{4}-1$, $u_2={N^2}/{4}$, $u_3={N^2}/{2}+{N}/{2}-1$ and $u_4=N^2-{N}/{2}$. It can be noticed that the phase errors in $\mathbf{P}^{\mathrm{blk}}_{\epsilon}$ increase linearly along the trajectory in Fig. \ref{fig:blktraj}. Each of the four $N/2 \times N/2 $ submatrices in  $\mathbf{P}^{\mathrm{blk}}_{\epsilon}$ have unique spectral components at $\left(-N\epsilon/2,-\epsilon\right),\left(N\epsilon/2,\epsilon\right),\left(-\epsilon,N\epsilon/2\right)$ and $\left(\epsilon,-N\epsilon/2\right)$ in the 2D Fourier representation. Therefore, the 2D Fourier transform of $\mathbf{P}^{\mathrm{blk}}_{\epsilon}$ has four distinct spectral components for an $\epsilon \neq 0$. For a CFO of $\epsilon$, the virtual channel matrix acquired with the block sampling trajectory is $\mathbf{G} \odot \mathbf{P}^{\mathrm{blk}}_{\epsilon}$. By the multiplication-convolution duality of the Fourier transform \cite{OPP}, we can conclude that CFO replicates each spectral component in $\mathbf{G}$ into four components in the masked beamspace, i.e., $\hat{\mathbf{S}}$. This replica effect, shown in Fig. \ref{fig:blkrec}, distorts the sparsity of the beamspace and is not desirable from a CS perspective. 
\par It can be observed that both the row and block sampling trajectories introduce ambiguity in recovering the true beamspace estimate. For example, it is difficult to infer if the beamspace matrix is the reconstructed matrix for an $\epsilon=0$, or the true beamspace matrix with the true $\epsilon$.  
\subsection{Are there trajectories that result in zero shift?} \label{sec:DCcomp}  
For the trajectory to result in zero beamspace shift, the 2D Fourier transform of $ \mathbf{P}_{\epsilon}$ must have  a peak at $\left(0,0\right)$ as $\hat{\mathbf{S}}=\mathbf{S} \circledast \left( \mathbf{U}_N \mathbf{P}_{\epsilon} \mathbf{U}_N \right)$. Furthermore, the spread about the peak must be minimum to reduce blurring in the reconstructed beamspace. Irrespective of the choice of full-sampling trajectory, the DC spatial frequency component of $\mathbf{P}_{\epsilon}$ is $\sum_{n=0}^{N^2-1} e^{\mathrm{j} \epsilon n}/ N^2$ , and is approximately zero for typical values of $\epsilon$ and $N$. Therefore, any trajectory results in non-zero shift in the beamspace estimate in typical mmWave systems. 
\par Beam alignment using the distorted beamspace directly translates to reduction in the received SNR. For instance, any unknown global shift in the beamspace estimate would result in TX beam misalignment. Beamforming using blurred beamspace matrices can result in beams with a higher beam-width and reduce the received SNR. The problem of finding shift and blur minimizing trajectories is complicated for the undersampled case $M \ll N^2$, and is investigated in Sec. \ref{sec:usamp}.
\section{Undersampling trajectories for beamforming: A space-time game} \label{sec:usamp}
In this section, we design a randomized undersampling trajectory in the virtual antenna domain that minimizes the impact of CFO on standard CS algorithms.
\par  For the case of full-sampling, the conditions for a good trajectory translate to requirements on the induced phase error matrix $\mathbf{P}_{\epsilon}$. From the discussion in Sec. \ref{sec:DCcomp}, it can be concluded that the ideal $\mathbf{P}_{\epsilon}$, with $\mathbf{U}_N \mathbf{P}_{\epsilon} \mathbf{U}_N = N \mathbf{e}_0 \mathbf{e}^T_0$, cannot be realized for $\epsilon \neq 0$. In other words, the spectrum of $\mathbf{P}_{\epsilon}$ cannot be concentrated at the DC frequency, i.e., $(0,0)$. As a compromise, we seek a design such that the spectrum of $\mathbf{P}_{\epsilon}$ is concentrated at a point that is proximal to the DC frequency. It can be seen that the entries of the matrix $\mathbf{P}_{\epsilon}$ are constrained to be distinct elements in the set $\{ 1,e^{\mathrm{j} \epsilon},e^{\mathrm{j} 2\epsilon},...,e^{\mathrm{j} (N^{2}-1)\epsilon}\}$ for full-sampling. Therefore, finding the best trajectory such that the corresponding phase error matrix satisfies spectral concentration and proximity properties requires optimization over permutations of $N^2$ elements. The trajectory optimization problem is even more complicated for the undersampled case as it also requires selecting the subset of coordinates to sample.
\par In this work, we propose a sub-optimal solution to the trajectory optimization problem by first constructing a phase error matrix that satisfies the concentration and proximity properties, and then fit an undersampling trajectory that is consistent with the matrix. With $x$ and $y$ used to denote the row and column coordinates in a $N \times N$ matrix, we define the $k^{\mathrm{th}}$  contour as $\left\{ \left(x,y\right):\,x+y=k\right\}$. For reference, the coordinate of the first entry in a $N \times N$ matrix is set to $\left(0,0\right)$. The  $x$ and $y$ coordinates increase along the row and column of a matrix. The contours for $N=6$ are shown as dotted lines in Fig.~\ref{fig:pseq}.
By definition, an $N \times N$ matrix would have $2N-1$  contours. For a CFO of $\epsilon$, a phase error matrix satisfying spectral concentration and proximity is 
\begin{equation}
\mathbf{P}^{\mathrm{cnt}}_{\epsilon}=\left[\begin{array}{ccccc}
1 & e^{\mathrm{j}\epsilon} & e^{\mathrm{j}2\epsilon} & \cdots & e^{\mathrm{j}(N-1)\epsilon}\\
e^{\mathrm{j}\epsilon} & e^{\mathrm{j}2\epsilon} &  &  & e^{\mathrm{j}N\epsilon}\\
e^{\mathrm{j}2\epsilon} & \vdots & \iddots & \iddots & e^{\mathrm{j}(N+1)\epsilon}\\
\vdots &  &  &  & \vdots\\
e^{\mathrm{j}\left(N-1\right)\epsilon} & e^{\mathrm{j}N\epsilon} & e^{\mathrm{j}(N+1)\epsilon} & \cdots & e^{\mathrm{j}2(N-1)\epsilon}
\end{array}\right],
\label{eq:crossmat}
\end{equation}
where the phase errors are invariant along any  contour.  As $\mathbf{P}^{\mathrm{cnt}}_{\epsilon}=\mathbf{a}_N\left( \epsilon \right)\mathbf{a}^T_N\left( \epsilon \right)$, its Fourier transform is concentrated at $\left(\epsilon, \epsilon \right)$. The matrix $\mathbf{P}^{\mathrm{cnt}}_{\epsilon}$ cannot be realized with any full-sampling trajectory as the phase errors repeat along each contour. It can be noticed from \eqref{eq:canonical} that a single entry of $\mathbf{G}$ can be sampled in a given beam training slot, and CFO induces a phase error of $e^{\mathrm{j} k\epsilon}$ in the sample acquired in the $k^{\mathrm{th}}$ slot. Realizing the phase errors in $\mathbf{P}^{\mathrm{cnt}}_{\epsilon}$ requires sampling multiple entries of $\mathbf{G}$ in the same slot, which is not feasible. For example, all the $N$ entries of $\mathbf{G}$ on the $(N-1)^{\mathrm{th}}$ contour must be sampled in parallel in the $N^{\mathrm{th}}$ slot to meet the uniform phase error constraint along the $(N-1)^{\mathrm{th}}$ contour of $\mathbf{P}^{\mathrm{cnt}}_{\epsilon}$. 
\par Now, we explain how to design undersampling trajectories that realize the same effect as $\mathbf{P}^{\mathrm{cnt}}_{\epsilon}$. For simplicity of exposition, we consider the virtual channel matrix $\mathbf{G}$ in \eqref{eq:canonical} to be an all-ones matrix. We aim to undersample $\mathbf{G}$ in such a way that the reconstructed beamspace has spectrum concentrated at $\left( \epsilon, \epsilon \right)$ for a CFO of $\epsilon$. As all the entries of $\mathbf{G}$ are equal to $1$, the undersampled matrix contains just the phase errors induced by CFO. Our objective is to make the undersampled phase error matrix consistent with $\mathbf{P}^{\mathrm{cnt}}_{\epsilon}$ to the best possible extent. As sampling multiple entries of $\mathbf{G}$ in a given slot is not feasible, the undersampling trajectory must sample exactly one coordinate from the $k^{\mathrm{th}}$ contour of $\mathbf{G}$ during the $k^{\mathrm{th}}$ beam training slot. There would be $N\left(\left(N-1\right)!\right)^{2}$ undersampling trajectories of length $2N-1$ that satisfy this requirement; one of them is shown in Fig.~\ref{fig:pseq}. Channel acquisition using such undersampling trajectory yields an undersampled version of $\mathbf{G} \odot \mathbf{P}_{\epsilon}^{\mathrm{cnt}}$ for a CFO of $\epsilon$. 
\par We investigate the possibility of recovering $\mathbf{G} \odot \mathbf{P}_{\epsilon}^{\mathrm{cnt}}$ from its undersampled version. For any matrix $\mathbf{G}$, the 2D-Fourier transform of $\mathbf{G} \odot \mathbf{P}_{\epsilon}^{\mathrm{cnt}}$ is an $(\epsilon, \epsilon)$ shifted version of the 2D-Fourier transform of $\mathbf{G}$. When $\epsilon$ is an integer multiple of $2 \pi / N$, the matrix $\mathbf{G} \odot \mathbf{P}_{\epsilon}^{\mathrm{cnt}}$ has a sparse 2D-DFT representation whenever the 2D-DFT of $\mathbf{G}$ is sparse. This conclusion follows from the fact that the 2D-DFTs of $\mathbf{G}$ and $\mathbf{G} \odot \mathbf{P}_{\epsilon}^{\mathrm{cnt}}$ differ by an integer shift that lies on the DFT grid, and the observation that a shifted version of a sparse matrix is also sparse. When the 2D-DFT of $\mathbf{G}$ is sparse and $\epsilon$ is not an integer multiple of $2 \pi / N$, the 2D-DFT of $\mathbf{G} \odot \mathbf{P}_{\epsilon}^{\mathrm{cnt}}$ can be considered as approximately sparse due to leakage effects that arise from the off-grid CFO. CS algorithms can exploit the approximate sparse nature of $\mathbf{G} \odot \mathbf{P}_{\epsilon}^{\mathrm{cnt}}$ in the 2D-DFT basis to reconstruct the matrix $\mathbf{G} \odot \mathbf{P}_{\epsilon}^{\mathrm{cnt}}$ from its undersampled version. It is not clear, however, if the reconstruction is unique as it depends on the trajectory. We aim to find the trajectories that aid CS algorithms in recovering the $(\epsilon,\epsilon)$ shifted version of the masked beamspace.
\subsection{Swift-Link's undersampling mechanism}
\par We now describe p-trajectory, a component of Swift-Link's randomized trajectory, that is consistent with $\mathbf{P}^{\mathrm{cnt}}_{\epsilon}$. The trajectory is called as p-trajectory because a CFO of $\epsilon$ induces a positive shift of $(\epsilon,\epsilon)$ in the estimated beamspace. For the $n^{\mathrm{th}}$ beam training measurement, Swift-Link's p-trajectory samples a single coordinate of $\mathbf{G}$ at random from the coordinates on its $n^{\mathrm{th}}$  contour. In this work, we use uniform or binomial distributions to sample coordinates on each contour, and the coordinates are sampled independently across different contours. For instance, the binomial distribution based p-trajectory  samples one of the $3$ coordinates of the second contour according to the density $\left(0.25,0.5,0.25\right)$, while the uniform one samples according to $\left(1/3,1/3,1/3\right)$. A realization of the p-trajectory when the coordinates on each contour are sampled according to a uniform distribution is shown in Fig.~\ref{fig:pseq}. Standard CS using the samples acquired by the p-trajectory would result in an $(\epsilon, \epsilon)$ circulantly shifted version of the masked beamspace when successful. Therefore, standard CS-based beam alignment using the p-trajectory does not result in significant beam misalignment unless $\epsilon > \pi / N$. For a moderate CFO, beam-broadening \cite{Beambroad} can ensure a reasonable beamforming gain. To achieve sufficient link margin even for a high CFO, algorithms that correct the beamspace shifts must be developed.
\section{Swift-Link's strategy for shift correction}\label{sec:strategy}
The notion behind Swift-Link's low complexity algorithm for shift correction comes from the observation that traversing in the opposite direction of the p-trajectory results in circulant beamspace shifts in the opposite direction. From the two different circulantly shifted versions of the same masked beamspace matrix, it is possible to correct the shift and perform beamforming. We explain these ideas in detail to propose SwiftLink's type I and type II trajectories.
\par In this section, we derive a phase error matrix that satisfies the concentration and proximity properties, and helps in correcting the beamspace shift due to $\mathbf{P}_{\epsilon}^{\mathrm{cnt}}$. A phase error matrix that corresponds to a beamspace shift of $\left(-\epsilon, -\epsilon \right)$ for a CFO of $\epsilon$ is $\mathbf{P}^{\mathrm{cnt}}_{-\epsilon}$. The virtual channel acquired under such a phase error matrix is $\mathbf{G} \odot \mathbf{P}^{\mathrm{cnt}}_{-\epsilon}$. It can be observed from \eqref{eq:crossmat} that $\mathbf{P}^{\mathrm{cnt}}_{-\epsilon}= \mathbf{a}_N(-\epsilon)\mathbf{a}^T_N(-\epsilon)$, and the Fourier transform of $\mathbf{P}^{\mathrm{cnt}}_{-\epsilon}$ is concentrated at $(-\epsilon,-\epsilon)$. From the multiplication-convolution duality of the Fourier transform, it follows that the beamspace matrix corresponding to $\mathbf{G} \odot \mathbf{P}^{\mathrm{cnt}}_{-\epsilon}$ is a $(-\epsilon,-\epsilon)$ shifted version of the true beamspace. Unfortunately, the phase errors in $\mathbf{P}^{\mathrm{cnt}}_{-\epsilon}$ are negative integer multiples of $\epsilon$ and cannot be directly realized in practice. It is possible, however, to realize the same negative beamspace shift effect using the matrix $\mathbf{Q}_{\epsilon}^{\mathrm{cnt}}=e^{\mathrm{j}2 \left(N-1\right) \epsilon}\mathbf{P}_{-\epsilon}^{\mathrm{cnt}}$ as global scaling does not alter the magnitude spectrum of a matrix. In mathematical terms, the beamspace matrix corresponding to $\mathbf{G} \odot \mathbf{Q}_{\epsilon}^{\mathrm{cnt}}$ is also a $(-\epsilon,-\epsilon)$ shifted version of the true beamspace. %Now, subsampling trajectories that are consistent with $\mathbf{Q}_{\epsilon}^{\mathrm{cnt}}$ must be designed. 
\par Similar to the p-trajectory that is consistent with $\mathbf{P}_{\epsilon}^{\mathrm{cnt}}$, a trajectory that results in a beamspace shift of $(-\epsilon, -\epsilon)$ is designed using $\mathbf{Q}_{\epsilon}^{\mathrm{cnt}}$. A subsampling trajectory of length $2N-1$ that is consistent with $\mathbf{Q}_{\epsilon}^{\mathrm{cnt}}$ is one that sequentially traverses from the $\left( 2N-1 \right) ^{\mathrm{th}}$ contour to the $0^{\mathrm{th}}$ contour. The consistency of the trajectory can be observed by writing down the entries of the matrix $\mathbf{Q}_{\epsilon}^{\mathrm{cnt}}$. The trajectory is constrained to sample exactly one coordinate from any contour, and is called as the n-trajectory as it induces negative beamspace shift. A possible candidate for the n-trajectory is one that traverses exactly along the same path as the selected p-trajectory, but in the opposite direction. This choice, however, does not achieve diversity in sampling the virtual channel matrix, i.e., the p- and n-trajectories result in the same set of samples of $\mathbf{G}$ after correcting the phase errors due to CFO. Therefore, Swift-Link's p- and n-trajectories are chosen independently. 
\begin{figure}[htbp] %6,5 and 7,5
\vspace{-3mm}
\subfloat[Swift-Link's p-trajectory]{\includegraphics[width=4.25cm, height=4.25cm]{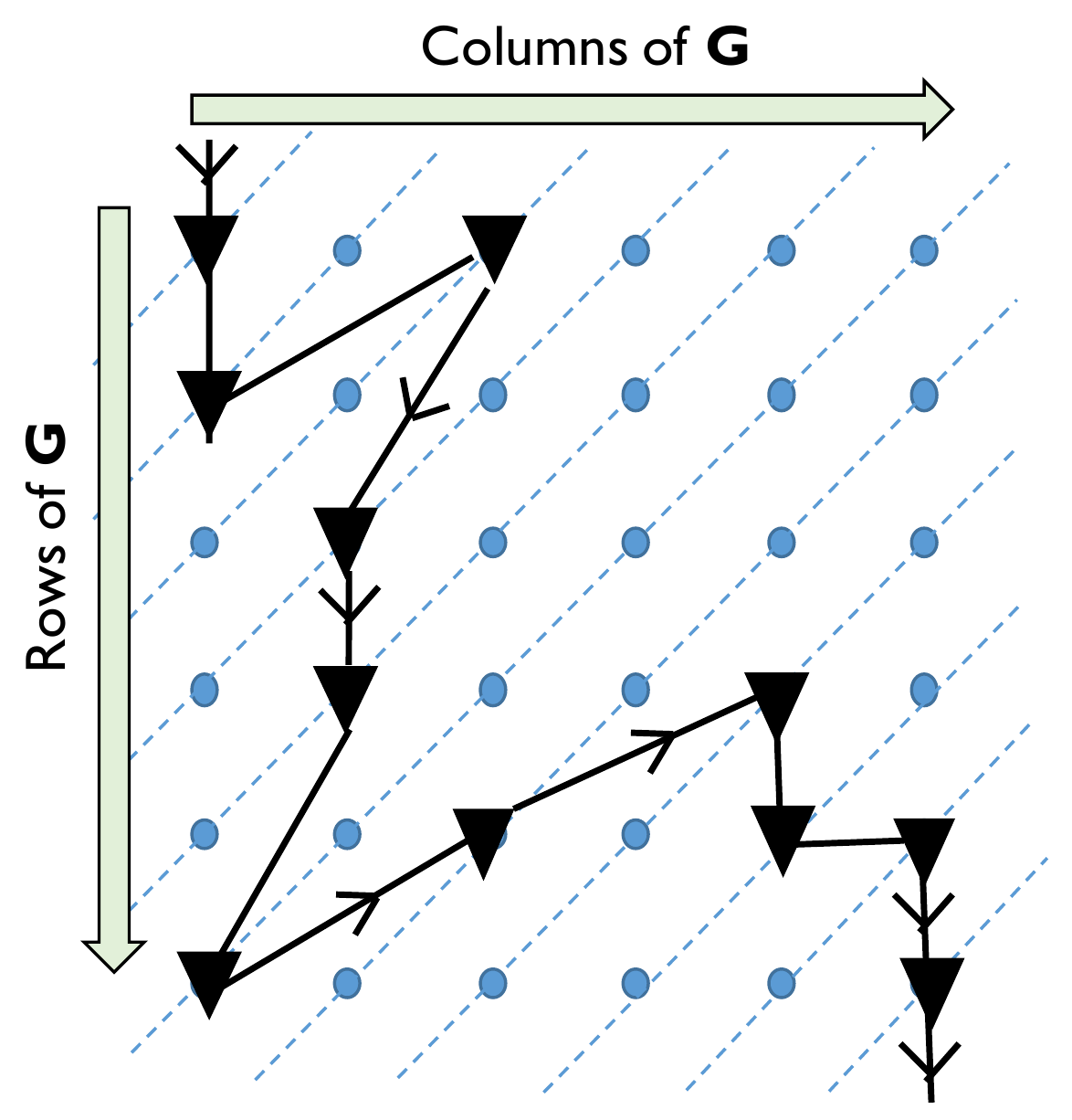}\label{fig:pseq}}
\hspace*{\fill}
\subfloat[Swift-Link's type I trajectory]{\includegraphics[width=4.25cm, height=4.25cm]{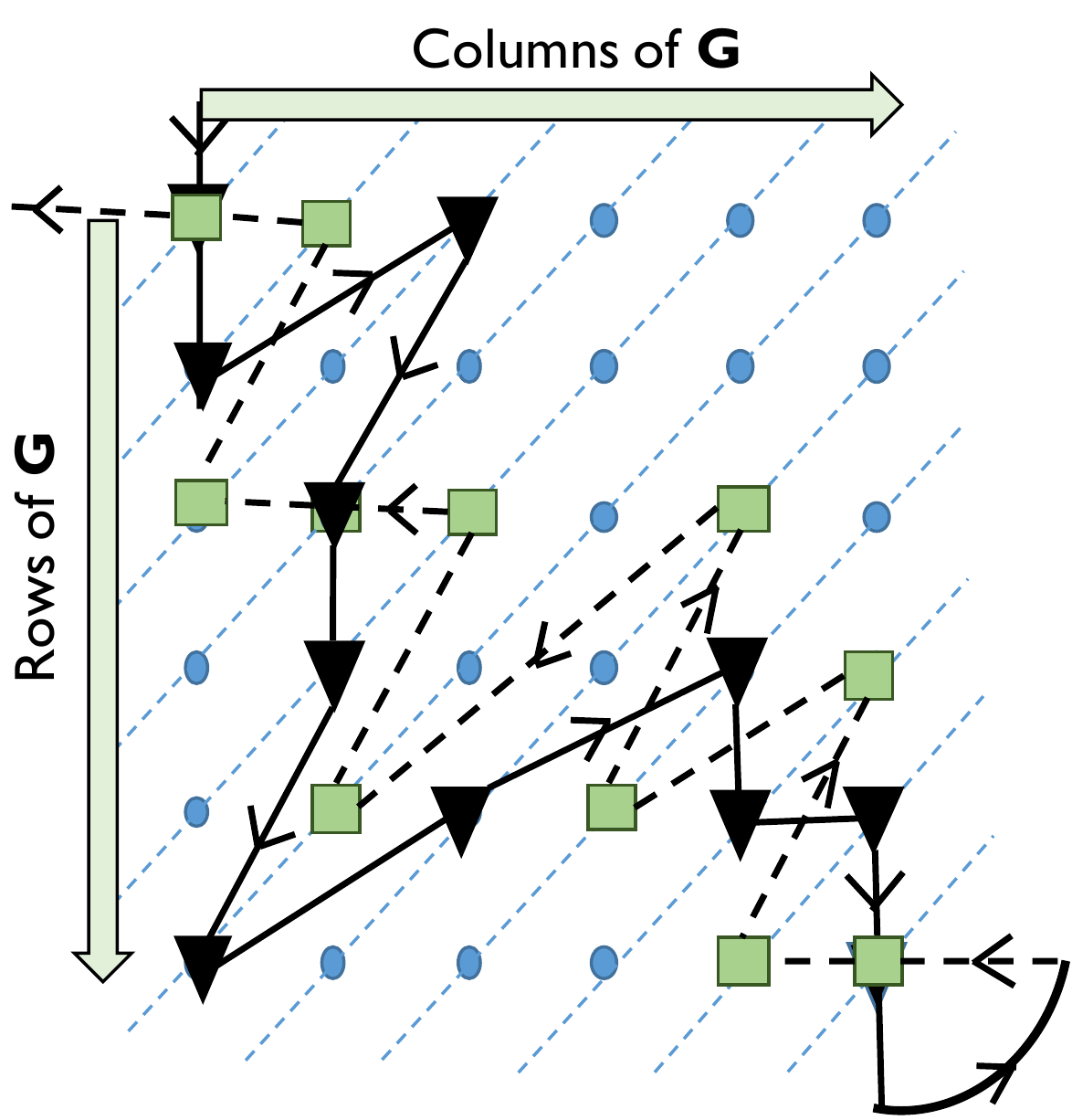}}\label{fig:typeI}
\caption{ \small A realization of Swift-Link's p-trajectory and type I trajectory. The type I trajectory comprises of a p-trajectory and a n-trajectory.\normalsize}
\vspace{-5mm}
\end{figure}
\subsection{Swift-Link's type I trajectory and shift correction}
We use $\Omega_p$ and $\Omega_n$ to denote the ordered set of coordinates sampled by the p- and n- trajectories. The type I trajectory $\Omega_{\mathrm{I}}$, first traverses according to $\Omega_p$ and then $\Omega_n$, i.e., the ordered set of coordinates sampled by type I trajectory is $\Omega_{\mathrm{I}}=\left\{\Omega_{p}, \Omega_{n} \right\} $. For a CFO of $\epsilon$, the channel measurements corresponding to the p- and n-components of $\Omega_{\mathrm{I}}$ are noisy versions of $\mathcal{P}_{\Omega_{p}}\left(\mathbf{G}\odot\mathbf{P}_{\epsilon}^{\mathrm{\mathrm{cnt}}}\right)$ and $e^{\mathrm{j} \theta_0}\mathcal{P}_{\Omega_{n}}\left(\mathbf{G}\odot\mathbf{P}_{-\epsilon}^{\mathrm{\mathrm{cnt}}}\right)$, where $\theta_0$ accounts for the intial phase of the n-trajectory. Specifically, $\theta_0 = (2N-1) \epsilon  +2(N-1)\epsilon$ for a $2N-1$ length p-trajectory. The first term in $\theta_0$ represents the phase error accumulated after traversing along the $2N-1$ length p-trajectory. The second term in $\theta_0$, i.e., $2(N-1)\epsilon$, accounts for the global phase difference between $\mathbf{P}_{-\epsilon}^{\mathrm{\mathrm{cnt}}}$ and $\mathbf{Q}_{\epsilon}^{\mathrm{\mathrm{cnt}}}$. As the matrices $\mathbf{G}\odot\mathbf{P}_{\epsilon}^{\mathrm{\mathrm{cnt}}}$ and $e^{\mathrm{j} \theta_0}\mathbf{G}\odot\mathbf{P}_{-\epsilon}^{\mathrm{\mathrm{cnt}}}$ are sparse in the 2D Fourier representation, CS can be used to recover them with undersampled measurements, and let $\mathbf{G}_{\mathrm{I,p}}$ and $\mathbf{G}_{\mathrm{I,n}}$ be the CS estimates. For $\mathbf{G}=\mathbf{1}\mathbf{1}^T$ and $M=2(2N-1)$ channel measurements acquired using binomial sampling-based type I trajectory, the masked beamspace estimates obtained using a standard CS algorithm under a CFO error are shown in Fig. \ref{fig:p_n_master}. The masked beamspace estimates in Fig. \ref{fig:p_n_master} are 2D Fourier transforms of the virtual channel estimates $\mathbf{G}_{\mathrm{I,p}}$ and $\mathbf{G}_{\mathrm{I,n}}$.  
\begin{figure}[htbp]
\subfloat[Masked beamspace estimate with CS over the p-trajectory samples]{\includegraphics[trim=1cm 0cm 1cm 1cm,clip=true,width=4.25cm, height=4.25cm]{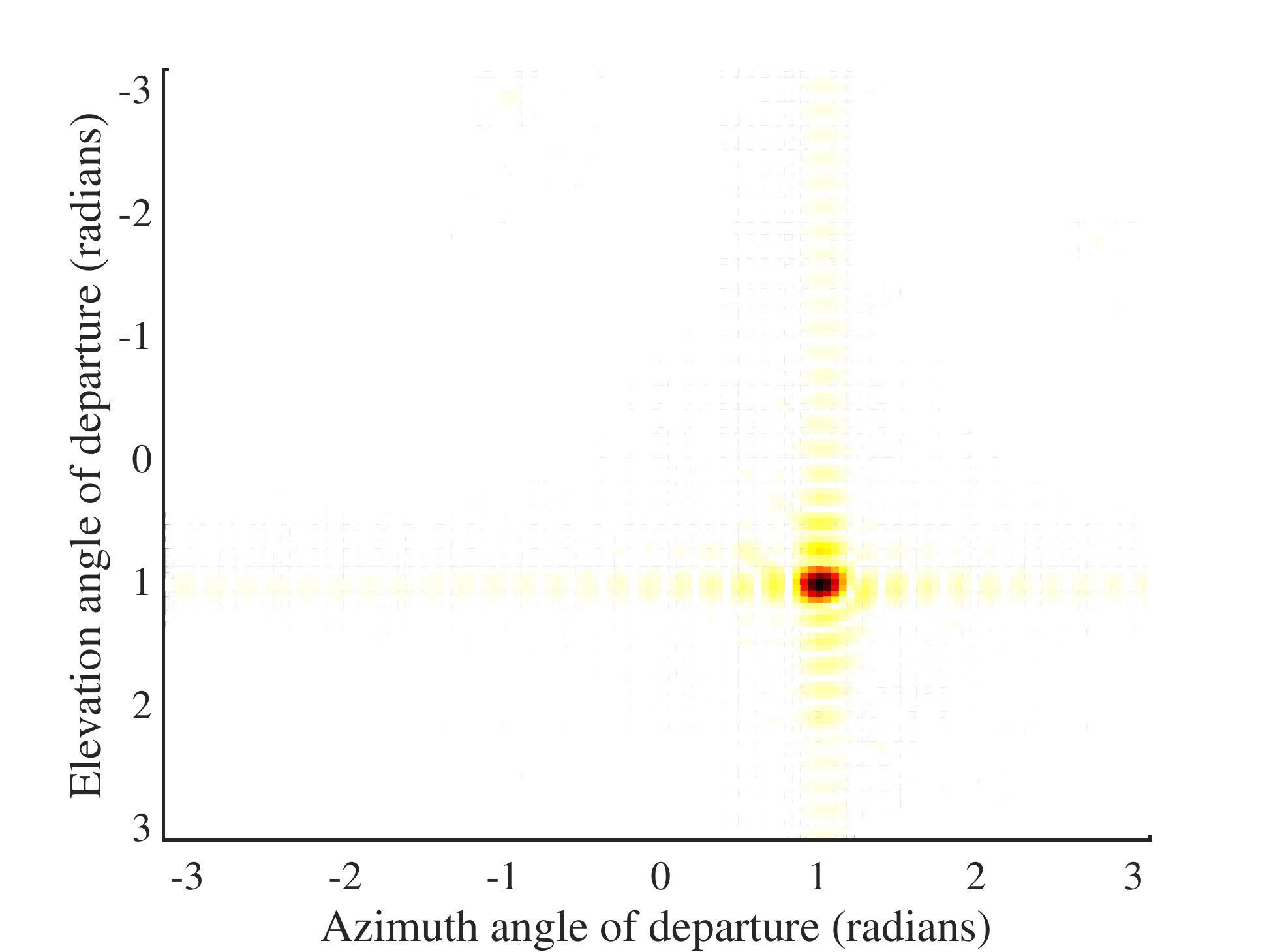}\label{fig:ptraj_recov}} 
\hspace*{\fill}
\subfloat[Masked beamspace estimate with CS over the n-trajectory samples]{\includegraphics[trim=1cm 0cm 1cm 1cm,clip=true,width=4.25cm, height=4.25cm]{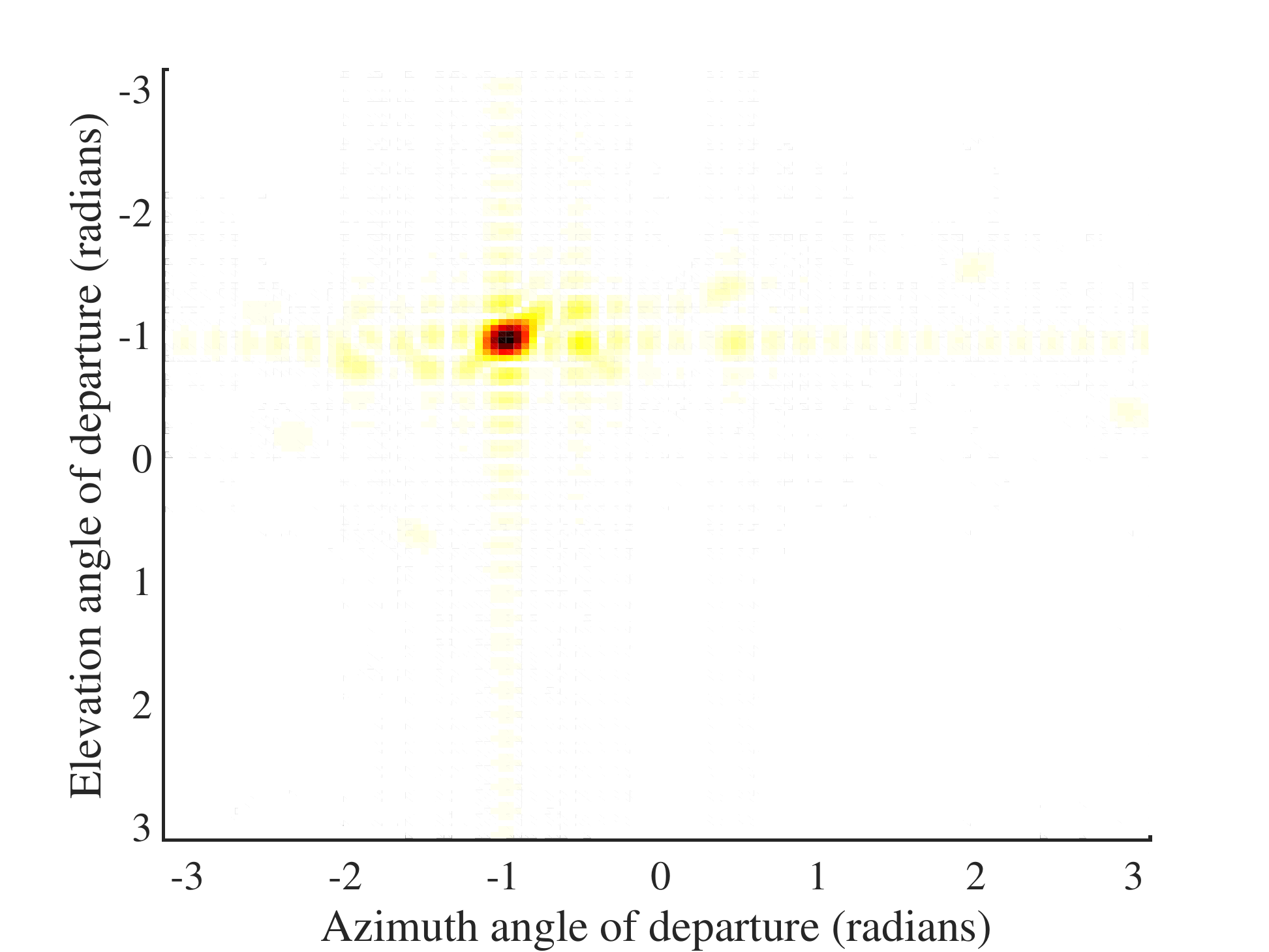}\label{fig:ntraj_recov}}
\caption{\small A CFO of $\epsilon$ induces a beamspace shift of $(\epsilon,\epsilon)$  for standard CS with the p-trajectory. Similarly, a beamspace shift of $(-\epsilon, -\epsilon)$ is induced with the n-trajectory. In this example, the true masked beamspace had a unique non-zero component at $(0,0)$ and $\epsilon=1 \, \mathrm{rad}$.
\normalsize}
\label{fig:p_n_master}
\end{figure}
The spatial shift between the masked beamspace estimates can be observed to be $(2\epsilon, 2\epsilon)$ for a CFO of $\epsilon$.
\par When CS perfectly recovers the matrices $\mathbf{G}\odot\mathbf{P}_{\epsilon}^{\mathrm{\mathrm{cnt}}}$ and $e^{\mathrm{j} \theta_0}\mathbf{G}\odot\mathbf{P}_{-\epsilon}^{\mathrm{\mathrm{cnt}}}$ under a CFO error, we have $\mathbf{G}_{\mathrm{I,p}}=e^{\mathrm{j} \theta }\mathbf{G}_{\mathrm{I,n}} \odot \mathbf{P}_{2 \epsilon}^{\mathrm{\mathrm{cnt}}}$, where $\theta= -\theta_0$. For the noisy case, $\epsilon$ is estimated from $\mathbf{G}_{\mathrm{I,p}}$ and $\mathbf{G}_{\mathrm{I,n}}$ using 
\begin{align}
\nonumber
(\hat{\epsilon}, \hat{\theta}) &=\underset{\Delta, \theta}{\mathrm{arg\,min}}\Vert \mathbf{G}_{\mathrm{I,p}}-e^{\mathrm{j} \theta}\mathbf{G}_{\mathrm{I,n}}\odot\mathbf{P}_{2\Delta}^{\mathrm{\mathrm{cnt}}}\Vert _{F}^{2}\\
\nonumber
&=\underset{\Delta,\theta}{\mathrm{arg\,min}} \big\{ \Vert\mathbf{G}_{\mathrm{I,p}}\Vert_{F}^{2}+\Vert\mathbf{G}_{\mathrm{I,n}}\Vert_{F}^{2}\\
&-2\mathrm{Re}\{\langle\mathbf{G}_{\mathrm{I,p}},e^{\mathrm{j}\theta}\mathbf{G}_{\mathrm{I,n}}\odot\mathbf{P}_{2\Delta}^{\mathrm{\mathrm{cnt}}}\rangle\} \big\} . 
\label{eq:cfestim_pre1}
\end{align}
As the first two terms in the minimization in \eqref{eq:cfestim_pre1} are independent of both $\Delta$ and $\theta$, the solution to \eqref{eq:cfestim_pre1} can be found using  
\begin{align}
(\hat{\epsilon}, \hat{\theta})&=\underset{\Delta,\theta}{\mathrm{arg\,max}}\,\,\mathrm{Re}\left\{ \left\langle \mathbf{G}_{\mathrm{I,p}},e^{\mathrm{j} \theta}\mathbf{G}_{\mathrm{I,n}}\odot\mathbf{P}_{2\Delta}^{\mathrm{\mathrm{cnt}}}\right\rangle \right\} \\
&=\underset{\Delta,\theta}{\mathrm{arg\,max}}\,\,\mathrm{Re}\left\{ e^{-\mathrm{j} \theta}\left\langle \mathbf{G}_{\mathrm{I,p}}\odot\mathbf{\overline{G}}_{\mathrm{I,n}},\mathbf{P}_{2\Delta}^{\mathrm{\mathrm{cnt}}}\right\rangle \right\}.
\label{eq:opt_cfo_init}
\end{align}
The maximization in \eqref{eq:opt_cfo_init} can be decoupled to obtain  
\begin{equation}
\label{eq:2Dsearch}
\hat{\epsilon}=\underset{\Delta}{\mathrm{arg\,max}}| \langle\mathbf{G}_{\mathrm{I,p}}\odot\mathbf{\overline{G}}_{\mathrm{I,n}},\mathbf{P}_{2\Delta}^{\mathrm{\mathrm{cnt}}}\rangle |,
\end{equation} 
which is a problem of finding the location of the maximum 2D frequency component of $\mathbf{G}_{\mathrm{I,p}}\odot\mathbf{\overline{G}}_{\mathrm{I,n}}$ on the $y=x$ line. With the definition in \eqref{eq:crossmat}, the inner product in \eqref{eq:2Dsearch} can be expressed as
\begin{align}
\langle\mathbf{G}_{\mathrm{I,p}}\odot\mathbf{\overline{G}}_{\mathrm{I,n}},\mathbf{P}_{2\Delta}^{\mathrm{\mathrm{cnt}}}\rangle&=\!\!\!\!\! \!\!\!\! \sum_{(r,c)\in\mathcal{I}_{N}\times\mathcal{I}_{N}}\!\!\!\!\!\!\!\!\mathbf{G}_{\mathrm{I,p}}(r,c)\mathbf{\overline{G}}_{\mathrm{I,n}}(r,c)e^{-\mathsf{j}2\Delta(r+c)} \\
\label{eq:expandedinnerprod}
&=\underset{k\in\mathcal{I}_{2N-1}}{\sum}\!\!\!\!\!e^{-\mathsf{j}2\Delta k} \!\!\!\! \!\!\!\! \underset{(r,c):r+c=k}{\sum}\!\!\!\!\!\!\mathbf{G}_{\mathrm{I,p}}(r,c)\mathbf{\overline{G}}_{\mathrm{I,n}}(r,c).
\end{align}
We use the inner product expansion in \eqref{eq:expandedinnerprod} to transform the 2D frequency search problem in \eqref{eq:2Dsearch} into a 1D search problem. To achieve such reduction in dimensionality, we define a vector $\mathbf{g} \in \mathbb{C}^{2N-1}$ such that its $k^{\mathrm{th}}$ element is
\begin{equation}
g\left[k \right]=\underset{\left(r,c\right):r+c=k}{\sum}\mathbf{G}_{\mathrm{I,p}}\left(r,c\right)\mathbf{\overline{G}}_{\mathrm{I,n}}\left(r,c\right). 
\label{eq:cfo_estim_vect}
\end{equation}
From  \eqref{eq:2Dsearch}, \eqref{eq:expandedinnerprod} and \eqref{eq:cfo_estim_vect}, it can be observed that 
\begin{equation}
\label{eq:cfestim_1D}
\hat{\epsilon}=\underset{\Delta}{\mathrm{arg\,max}}\,\, \left| \sum_{k=0}^{2N-2} \!\! g[k] e^{-\mathsf{j}2 \Delta k} \right|.
\end{equation}
The estimate $\hat{\epsilon}$ is obtained using an oversampled DFT of $\mathbf{g}$ followed by an interpolation technique proposed in \cite{candan}.
\par Using the estimated CFO, the masked beamspace can be estimated by correcting the CFO induced shift in $\mathbf{G}_{\mathrm{I,p}}$ and $\mathbf{G}_{\mathrm{I,n}}$. We define the compensated virtual channel estimates as $\mathbf{M}_{\mathrm{I,p}}=\mathbf{G}_{\mathrm{I,p}} \odot  \mathbf{P}_{-\hat{\epsilon}}^{\mathrm{\mathrm{cnt}}}$ and $\mathbf{M}_{\mathrm{I,n}}=\mathbf{G}_{\mathrm{I,n}} \odot  \mathbf{P}_{\hat{\epsilon}}^{\mathrm{\mathrm{cnt}}}$. The matrices $\mathbf{M}_{\mathrm{I,p}}$ and $\mathbf{M}_{\mathrm{I,n}}$ are noisy versions of the same underlying virtual channel matrix, upto a global phase. The global phase error between the compensated virtual channel estimates can be computed as $\phi=\underset{\omega}{\mathrm{arg\,min}}\,\Vert \mathbf{M}_{\mathrm{I,p}}-e^{\mathrm{j} \omega}\mathbf{M}_{\mathrm{I,n}}\Vert_F$. The scalar $\phi$ in closed form is $\mathrm{phase}(\langle \mathbf{M}_{\mathrm{I,p}}, \mathbf{M}_{\mathrm{I,n}} \rangle)$. The virtual channel estimate upto a scale factor is then $\mathbf{M}_{\mathrm{I}}= \mathbf{M}_{\mathrm{I,p}}+e^{\mathrm{j} \phi}\mathbf{M}_{\mathrm{I,n}}$. The matrix $\mathbf{M}_{\mathrm{I}}$ is then transformed to the antenna domain using the spectral mask concept and used for beamforming. For type I trajectory-based channel acquisition, we summarize Swift-Link in Algorithm \ref{alg:swlink}. We also provide a graphical illustration of Swift-Link in Fig. \ref{fig:theswiftlinkway}. 
\begin{figure*}[h]
\centering
% left bottom right top
\includegraphics[width=0.9 \textwidth]{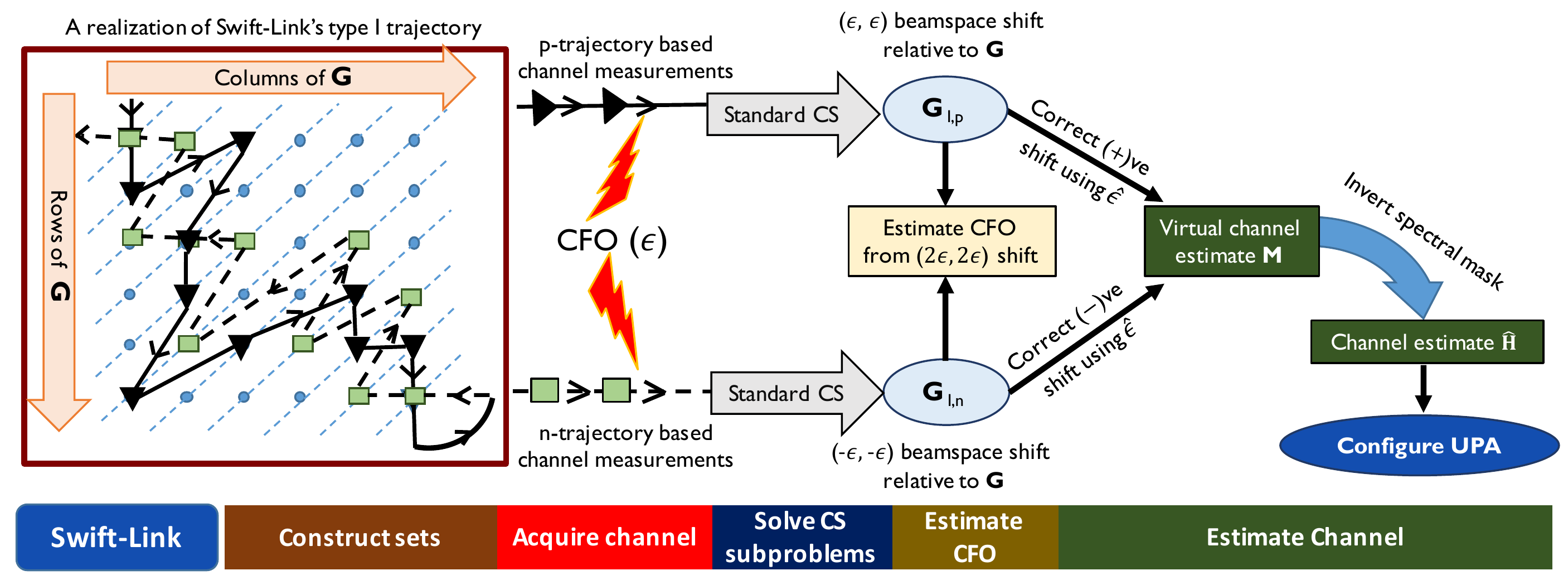}
  \caption{\small An illustration of Swift-Link for type I trajectory-based channel acquisition. Standard CS is performed using the channel measurements acquired by the p- and n- components of the type I trajectory. The CS estimates are then coherently combined to obtain $\hat{\mathbf{H}}$. \normalsize}
  \label{fig:theswiftlinkway}
\end{figure*}
\begin{algorithm}
\algrenewcommand\algorithmicindent{0.5em}%
\caption{Swift-Link for type I channel acquisition}\label{alg:swlink}
\begin{flushleft}
        \textbf{Inputs:} $M$, $N$ and a Zadoff-Chu sequence $\mathbf{z} \in \mathbb{C}^N$.\\
        \textbf{Outputs:} CFO estimate $\hat{\epsilon}$ and channel estimate $\hat{\mathbf{H}}$.
\end{flushleft}
\begin{algorithmic}
\Procedure{ConstructSets}{$M,N$}
\State Generate a p-trajectory of length $M/2$ on an $N\! \times \! N$ grid.
\State Similarly, generate an n-trajectory of length $M/2$.
\State Use the coordinates in the p- and n- trajectories to construct 
\State $\Omega_{p}=\{(r[n],c[n])\}^{M/2}_{n=1}$ and $\Omega_{n}=\{(r[n],c[n])\}^{M}_{n=M/2+1}$.
\State \textbf{return} $\Omega_{p},\Omega_{n}$
\EndProcedure
\Procedure{AcquireChannel}{$\Omega_{p},\Omega_{n},\mathbf{z}$}
\State Construct an $N\times N$ matrix $\mathbf{Z}=\left[\mathbf{z},\mathbf{Jz},\mathbf{J}^{2}\mathbf{z},...,\mathbf{J}^{N-1}\mathbf{z}\right]$.
\State For $n \in \mathcal{I}_{M}$, compute $\mathbf{b}_{n}=\mathbf{Z}\mathbf{e}_{r[n]}$ and $\mathbf{d}_{n}=\mathbf{Z}\mathbf{e}_{c[n]}$. 
\State Apply $\mathbf{b}_{n}\mathbf{d}^T_{n}$ to the UPA to get $y[n]$ under a CFO error.
\State \textbf{return} $\mathbf{y}$
\EndProcedure
\Procedure{SolveCSsubproblems}{$\mathbf{y},\Omega_{p},\Omega_{n}$}
\State Find $\mathbf{G}_{\mathrm{I},p}$ using 2D-DFT CS with $\Omega_{p}$ and $\{y[n]\}^{M/2}_{n=1}$.
\State Similarly, estimate $\mathbf{G}_{\mathrm{I},n}$ using $\Omega_{n}$ and $\{y[n]\}^{M}_{n=M/2+1}$.
\State \textbf{return} $\mathbf{G}_{\mathrm{I},p},\mathbf{G}_{\mathrm{I},n}$
\EndProcedure
\Procedure{EstimateCFO}{$\mathbf{G}_{\mathrm{I},p},\mathbf{G}_{\mathrm{I},n}$}
\State Construct $\mathbf{g} \in \mathbb{C}^{2N-1}$ from $\mathbf{G}_{\mathrm{I},n}$ and $\mathbf{G}_{\mathrm{I},p}$ using \eqref{eq:cfo_estim_vect}. 
\State Compute CFO estimate $\hat{\epsilon}$ using \eqref{eq:cfestim_1D}.
\State \textbf{return} $\hat{\epsilon}$
\EndProcedure
\Procedure{EstimateChannel}{$\mathbf{G}_{\mathrm{I},p},\mathbf{G}_{\mathrm{I},n},\hat{\epsilon}$}
\State Compute $\mathbf{M}_{\mathrm{I},p}=\mathbf{G}_{\mathrm{I},p}\odot\mathbf{P}_{-\hat{\epsilon}}^{\mathrm{cnt}}$ and $\mathbf{M}_{\mathrm{I},n}=\mathbf{G}_{\mathrm{I},n}\odot\mathbf{P}_{\hat{\epsilon}}^{\mathrm{cnt}}$.
\State Define $\phi=\mathrm{phase}(\langle\mathbf{M}_{\mathrm{I},p},\mathbf{M}_{\mathrm{I},n}\rangle)$, $\mathbf{M}=\mathbf{M}_{\mathrm{I},p}+e^{\mathsf{j}\phi}\mathbf{M}_{\mathrm{I},n}$.
\State Invert spectral mask to get $\hat{\mathbf{H}}=\mathbf{U}_{N}^{\ast}\mathbf{\boldsymbol{\Lambda}}_{\mathbf{z}}\mathbf{U}_{N}\mathbf{M}\mathbf{U}_{N}\mathbf{\boldsymbol{\Lambda}}_{\mathbf{z}}\mathbf{U}_{N}^{\ast}$.
\State \textbf{return} $\hat{\mathbf{H}}$
\EndProcedure
\end{algorithmic}
\end{algorithm}
\par A seemingly better strategy compared to coherent combining, i.e., $\mathbf{M}=\mathbf{M}_{\mathrm{I},p}+e^{\mathsf{j}\phi}\mathbf{M}_{\mathrm{I},n}$, in Algorithm \ref{alg:swlink} is to correct CFO in all the channel measurements obtained with Swift-Link's $\Omega_{\mathrm{I}}$ trajectory using $\hat{\epsilon}$.  The compensated channel measurements, i.e., $\{y[n]e^{-\mathsf{j}\hat{\epsilon}n} \}_{n \in \mathcal{I}_M}$, can be used for compressive channel estimation. CS-based algorithms that use these samples, however, do not perform well as any small CFO estimation error can result in large relative phase errors between the p- and n- components of $\Omega_{\mathrm{I}}$. For instance, the phase error between the two components would be $M\left(\epsilon-\hat{\epsilon}\right)/2$, where $M$ is the length of the type I trajectory. Coherent combining of the samples from the two components may require sparse self-calibration techniques \cite{biconvex} and can result in high complexity.  
\subsection{Swift-Link's type II trajectory and shift correction}
In this section, we propose a type II trajectory that does not introduce significant phase errors between the p- and n- components. The type II trajectory achieves robustness to CFO errors by interleaving the p- and n- trajectories in sequence. Specifically, the sequence of coordinates traversed is 
\normalsize%small
\begin{align}
\nonumber
\Omega_{\mathrm{II}}=
\{ \Omega_{p}(1),\Omega_{n}(1),\Omega_{p}(2),\Omega_{n}(2),..,\Omega_{p}(M/2),\Omega_{n}(M/2)\},
\end{align}
\normalsize where $\Omega \left( i \right)$ denotes the $i^{\mathrm{th}}$ coordinate in the ordered set $\Omega$. Due to interleaving, CFO induces a phase error of $2\epsilon$ in every step along each of the p- and n- components of $\Omega_{\mathrm{II}}$. 
\par Swift-Link recovers $\mathbf{G} \odot \mathbf{P}^{\mathrm{cnt}}_{2 \epsilon}$ and $e^{\mathrm{j} \beta}\mathbf{G} \odot \mathbf{P}^{\mathrm{cnt}}_{-2 \epsilon}$ using CS over the odd and even samples of the $\Omega_{\mathrm{II}}$ trajectory.    
Let $\mathbf{G}_{\mathrm{II,p}}$ and $\mathbf{G}_{\mathrm{II,n}}$ be the virtual channel estimates corresponding to the odd and even components. Similar to the type I case, the CFO is estimated as   
\begin{equation}
\hat{\epsilon}=\underset{\Delta}{\mathrm{arg\,max}}\left|\left\langle \mathbf{G}_{\mathrm{II,p}}\odot\mathbf{\overline{G}}_{\mathrm{II,n}},\mathbf{P}_{4\Delta}^{\mathrm{\mathrm{cnt}}}\right\rangle \right|.
\end{equation}
The estimated CFO $\hat{\epsilon}$ is used to correct phase errors in all the channel measurements obtained using $\Omega_{\mathrm{II}}$ and CS-based channel estimation is performed using the compensated samples. Although interleaving can efficiently leverage diversity in the channel measurements, it reduces the CFO correction range by a factor of $2$ compared to the type I case. 
\par In this section, we have illustrated our algorithm for type I and type II trajectories of length $M=4N-2$. Nevertheless, the same design principles can be applied for $M< 4N-2$ by selecting the trajectory about the $( N-1 ) ^{\mathrm{th}}$ contour. In this case, the length of each of the p- and n-trajectories is set to $M/2$. For $M<4N-2$, the p-trajectory would begin at the $( N-1 -\lfloor M/4 \rfloor )^{\mathrm{th}}$ contour and end at the $( N-2 + \lceil M/4 \rceil  )^{\mathrm{th}}$ one. The n-trajectory would traverse through the same contours, but in the opposite direction. The trajectories are selected about the $\left(N-1 \right) ^{\mathrm{th}}$ contour to ensure maximum randomization. The phase errors $\{e^{\mathsf{j}\epsilon k}\}_{k=0}^{M/2-1}$, when filled along the $M/2$ length p-trajectory, can be observed to be consistent with $e^{-\mathsf{j}\theta_{p,M}}\mathbf{P}_{\epsilon}^{\mathrm{cnt}}$, where $\theta_{p,M}=N-1 - \lfloor M/4 \rfloor$. Channel acquisition using such trajectory results in an undersampled version of $e^{-\mathsf{j}\theta_{p,M}} \mathbf{G}\odot \mathbf{P}_{\epsilon}^{\mathrm{cnt}}$, i.e., $e^{-\mathsf{j}\theta_{p,M}} \mathcal{P}_{\Omega_p}( \mathbf{G}\odot \mathbf{P}_{\epsilon}^{\mathrm{cnt}})$. Similarly, the n-trajectory acquires $e^{-\mathsf{j}\theta_{n,M}} \mathcal{P}_{\Omega_n}( \mathbf{G}\odot \mathbf{P}_{-\epsilon}^{\mathrm{cnt}})$ for some $\theta_{n,M}$. It can be noticed that the sparse beamspace matrices corresponding to the samples acquired by the p- and n-trajectories have the same physical interpretation as before, i.e., they are $(\epsilon, \epsilon)$ and $(-\epsilon, -\epsilon)$ shifted versions of the actual masked beamspace. Therefore, Swift-Link can still be applied with the designed training for $M<4N-2$ channel measurements. Extension of our algorithm for $M>4N-2$ requires the design of trajectories that are longer than the proposed trajectories and achieve robustness to CFO. Designing such trajectories for low-complexity beam alignment is an interesting direction for future work.
\par The fundamental principle of Swift-Link is to design a robust CS measurement matrix in a way that aids low complexity algorithms. Swift-Link's algorithm for the type II case requires solving two CS problems in dimension $N^2$ with $M/2$ measurements and one CS problem in dimension $N^2$ with $M$ measurements. In contrast, the tensor or lifting-based methods in \cite{nitinanalog} and \cite{biconvex} require solving an $N^2M$ dimensional CS problem with $M$ measurements. Due to the use of large antenna arrays at mmWave, the number of variables in  tensor-based optimization \cite{nitinanalog} can be in the order of millions for typical mmWave settings. As a result, the techniques proposed in \cite{nitinanalog} and \cite{biconvex} require a high computational complexity when compared to Swift-Link. For a greedy CS algorithm like OMP \cite{GOMP}, the computational complexity of Swift-Link and the tensor-based method is $\mathcal{O}(N^2 M)$ and $\mathcal{O}(N^2 M^2)$.        
\subsection{Analysis of Swift-Link} \label{sec:swlink_ana}
Swift-Link's success depends on the recovery of the shifted beamspace matrices using each of the p- and n-trajectories. In this section, we consider the samples acquired with the p-trajectory and derive novel reconstruction guarantees for the masked beamspace recovery under the CFO perturbation.
\par We describe the RIP and provide definitions required for our analysis. Consider a $K$ sparse masked beamspace matrix $\mathbf{S}$ with $\left(x_{k},y_{k}\right)_{k\in \mathcal{I}_K}$ as the locations of non-zero entries. For each $k \in \mathcal{I}_K$, $\left(x_{k},y_{k}\right) \in \mathcal{I}_N \times \mathcal{I}_N$ and let $\mathbf{S}\left( x_k, y_k \right)=s_k$. The virtual channel matrix coordinates can be expressed as
\begin{equation}
\mathbf{G}\left(r,c\right)=\sum_{k\in\mathcal{I}_K}s_{k}e^{\mathrm{j}2\pi \left(rx_{k}+cy_{k}\right)/N}.
\label{eq:analysis_G}
\end{equation}   
Let $M_{p}=\left|\Omega_{p}\right|$ be the number of samples of $\mathbf{G}$ acquired by the p-trajectory under the CFO perturbation. The $M_p$ length p-trajectory must begin at $p_0=N-(M_p+1)/2$ contour as it is chosen about the $\left(N-1\right)^{\mathrm{th}}$ contour. We assume that $M_p$ is an odd integer in our analysis. The CS matrix corresponding to $\Omega_p$  satisfies the RIP of order $K$ with parameter $\delta_K$ \cite{csintro} if the channel measurements corresponding to $\mathbf{S}$ satisfy 
\begin{equation}
\left(1-\delta_{K}\right)\left\Vert \mathbf{S}\right\Vert _{F}^{2}\leq\frac{1}{M_p}\left\Vert \mathbf{y}\right\Vert ^{2}\leq\left(1+\delta_{K}\right)\left\Vert \mathbf{S}\right\Vert _{F}^{2}
\label{eq:RIPdefn}
\end{equation}
for every $K$ sparse matrix $\mathbf{S}$. The normalization by $1/M_p$ in \eqref{eq:RIPdefn} ensures that the columns of the CS matrix have unit norm.
A small RIP parameter implies that the distance between sparse matrices is preserved in the measurement space and allows sparse recovery \cite{csintro}.
\par Now, we derive a sufficient condition that guarantees the recovery of an $\left(\epsilon, \epsilon\right)$ shifted version of the masked beamspace matrix using Swift-Link's p-trajectory. We define the minimum pair-wise spacing as $d_{\mathrm{min}}=\underset{k,\ell\in \mathcal{I}_K,\,k\neq\ell}{\mathrm{min}}\mathrm{max}\left\{ \left|x_{k}-x_{\ell}\right|^{+},\left|y_{k}-y_{\ell}\right|^{+}\right\} $, where 
\begin{equation}
\left|x\right|^{+}=\begin{cases}
\begin{array}{c}
\left|x\right|\\
N-\left|x\right|
\end{array} & \begin{array}{c}
\mathrm{if}\,\left|x\right|\leq N/2\\
\mathrm{otherwise}
\end{array}\end{cases}.
\end{equation} 
Notice that $1 \leq d_{\mathrm{min}} \leq  N/2$ for any set of non-sparse locations in $\mathbf{S}$.
\begin{theorem} 
On an average, the CS matrix corresponding to Swift-Link's p-trajectory satisfies the RIP of order $K>1$ with parameter $\delta_K$ when
\begin{equation}
M_p \geq \frac{K-1}{\delta_K \mathrm{sin}\frac{\pi d_{\mathrm{min}}}{N}}\left[4+2\mathrm{log}\left(\frac{N}{N+1-(M_{p}-1)/2}\right)\right]
\label{eq:CSmeasth}
\end{equation}
channel measurements are acquired using a uniform distribution over contours.  
\label{theorem1}
\end{theorem}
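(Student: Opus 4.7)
The plan is to prove the claimed RIP-on-average bound by controlling the off-diagonal entries of the expected normalized Gram matrix of the partial 2D-DFT sensing matrix induced by the random p-trajectory $\Omega_p$. For any $K$-sparse vector of coefficients $(s_k)$ supported on atoms $\{(x_k,y_k)\}_{k\in\mathcal{I}_K}$, substituting \eqref{eq:analysis_G} into $\|\mathbf{y}\|^2/M_p$ yields
\begin{equation*}
\frac{1}{M_p}\|\mathbf{y}\|^2-\|\mathbf{s}\|^2=\sum_{k\neq l}s_k\overline{s_l}\,T_{k,l},\quad T_{k,l}=\frac{1}{M_p}\sum_{n=1}^{M_p}e^{\mathrm{j}2\pi(r[n]\Delta x_{k,l}+c[n]\Delta y_{k,l})/N},
\end{equation*}
with $\Delta x_{k,l}=x_k-x_l$ and $\Delta y_{k,l}=y_k-y_l$. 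A Cauchy--Schwarz estimate $\sum_{k\neq l}|s_k||s_l|\leq(K-1)\|\mathbf{s}\|^2$ and the triangle inequality give $\delta_K\leq (K-1)\max_{k\neq l}|T_{k,l}|$. Taking expectation and interpreting ``on an average'' as the RIP of the expected Gram matrix $E[\mathbf{A}^{\ast}\mathbf{A}/M_p]$, it then suffices to show $(K-1)\max_{k\neq l}|E[T_{k,l}]|\leq\delta_K$.

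The p-trajectory deterministically visits contours $m[n]=p_0+n-1$ for $n=1,\ldots,M_p$ with $p_0=N-(M_p-1)/2$, and on each contour $r[n]$ is uniform on $I_m=\{r_m^{\min},\ldots,r_m^{\min}+L_m-1\}$ with $c[n]=m-r[n]$. Writing $\beta_{k,l}=\Delta x_{k,l}-\Delta y_{k,l}$, the expectation factors as
\begin{equation*}
E[T_{k,l}]=\frac{1}{M_p}\sum_{m=p_0}^{p_0+M_p-1}e^{\mathrm{j}2\pi m\Delta y_{k,l}/N}B_m,\qquad B_m=\frac{1}{L_m}\sum_{r\in I_m}e^{\mathrm{j}2\pi r\beta_{k,l}/N},
\end{equation*}
with $|B_m|\leq 1$. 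By the definition of $d_{\min}$ we may assume $|\Delta y_{k,l}|^{+}\geq d_{\min}$ (otherwise swap the roles of $x$ and $y$ via the symmetric parametrization $c=m-r$). The partial sums of $e^{\mathrm{j}2\pi m\Delta y_{k,l}/N}$ are then bounded in magnitude by $1/|\sin(\pi\Delta y_{k,l}/N)|\leq 1/\sin(\pi d_{\min}/N)$, so Abel summation applied to the outer sum gives
\begin{equation*}
|E[T_{k,l}]|\leq\frac{1}{M_p\sin(\pi d_{\min}/N)}\Bigl(|B_{p_0+M_p-1}|+\sum_{m}|B_{m+1}-B_m|\Bigr).
\end{equation*}

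The total variation of $B_m$ is controlled by using the closed-form Dirichlet-kernel expression for each $B_m$: each increment splits into a ``new-sample'' contribution of size $O(1/L_{m+1})$ and a ``reweighting'' contribution of size $O(1/(L_m L_{m+1}))$, with a piecewise adjustment at $m=N-1$ where the lower endpoint of $I_m$ shifts from $0$ to $m-N+1$. Telescoping yields $\sum_m|B_{m+1}-B_m|\leq C\sum_m 1/L_m+O(1)$. The contour lengths of the p-trajectory grow symmetrically around $m=N-1$ from the minimum $N-(M_p-1)/2$ up to $N$ and back, and their reciprocals add up to the harmonic tail
\begin{equation*}
\sum_{m=p_0}^{p_0+M_p-1}\frac{1}{L_m}\leq 2\log\frac{N}{N+1-(M_p-1)/2}+O(1).
\end{equation*}
Assembling these bounds and absorbing the $O(1)$ pieces into the additive constant $4$ gives $|E[T_{k,l}]|\leq\bigl[4+2\log(N/(N+1-(M_p-1)/2))\bigr]/[M_p\sin(\pi d_{\min}/N)]$, whereupon $(K-1)|E[T_{k,l}]|\leq\delta_K$ is exactly \eqref{eq:CSmeasth}.

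The main obstacle is the total-variation estimate for $B_m$. Since $I_m$ changes both its length and its lower endpoint as $m$ crosses the central contour $N-1$, the analytical form of $B_m$ transitions from a left-anchored Dirichlet kernel to a right-anchored one; the piecewise telescoping has to be carried out carefully so that the transition contributes only an $O(1)$ constant rather than an additional logarithmic factor, ensuring the theorem's constant $4$ instead of a blow-up. A secondary subtlety is the justification of reducing the expected RIP to $\max_{k\neq l}|E[T_{k,l}]|$ rather than $E[\max_{k\neq l}|T_{k,l}|]$: the former is exactly the RIP of $E[\mathbf{A}^{\ast}\mathbf{A}/M_p]$ and is the natural ``on average'' reading of the theorem, whereas a pathwise $E[\delta_K]\leq\delta_K$ statement would additionally need a union-bound concentration argument over $K(K-1)$ pairs that would only tighten the logarithm.
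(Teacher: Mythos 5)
Your proposal follows essentially the same route as the paper's Appendix~A: interpret ``on an average'' as the RIP of the expected energy/Gram form, reduce to the cross terms via $\sum_{k\neq\ell}|s_k||s_\ell|\leq(K-1)\Vert\mathbf{S}\Vert_F^2$, condition on each contour so that the expectation factors into a normalized Dirichlet-type average $B_m$ times a cross-contour geometric phase, and control the resulting sum by telescoping with increments of size $O(1/L_m)$, whose harmonic sums give the logarithm, with $d_{\mathrm{min}}$ lower-bounding the sine (choosing the better of the two dimensions, as the paper does through $\eta_{x,y}$). The paper packages exactly this as Lemmas~1--2: the twisted differences $|T_{x,y}[n+1]-e^{\mathrm{j}x}T_{x,y}[n]|\leq 2/(n+2)$ are your $|B_{m+1}-B_m|$ bounds after removing the common phase, and multiplying by $(1-e^{\mathrm{j}x})$ and telescoping over the upper and lower halves is your summation by parts, so the difference is packaging rather than substance.

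One concrete issue is the constant bookkeeping in your Abel step. In the form $\max_m|U_m|\bigl(|B_{\mathrm{last}}|+\sum_m|B_{m+1}-B_m|\bigr)$ you pay $\max_m|U_m|\leq 1/\sin(\pi d_{\mathrm{min}}/N)$, while the total variation over the full trajectory is about $2\sum_m 1/L_m\approx 4\log(\cdot)$ (each half contributes roughly $2\log$), so your route as written produces a coefficient of about $4$ on the logarithm rather than the $2$ in \eqref{eq:CSmeasth}; ``absorbing the $O(1)$ pieces into the additive constant $4$'' cannot repair a multiplicative factor on the log term. The paper instead multiplies by $(1-e^{\mathrm{j}x})$ and uses $|1-e^{\mathrm{j}x}|=2|\sin(x/2)|$, which halves the log coefficient; doing this single twisted telescoping (per half, or in one pass with the transition at the $(N-1)^{\mathrm{th}}$ contour handled as you describe) recovers the stated constant. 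This affects only the explicit constant, not the $\mathcal{O}(KN\log N)$ conclusion. A minor further point: the p-trajectory starts at $p_0=N-(M_p+1)/2$, not $N-(M_p-1)/2$; it is $p_0+2=N+1-(M_p-1)/2$ that appears inside the logarithm of the theorem.
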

\begin{proof}
See Appendix-A. 
\end{proof}
From the lower bound in \eqref{eq:CSmeasth}, it can be concluded that $\mathcal{O}(2KN \mathrm{log}N)$ samples suffice for Swift-Link's CS matrix to satisfy the RIP in the worst case, i.e., when $d_{\mathrm{min}}=1$. As Swift-Link limits randomization within the contours for robustness to CFO, it requires $\mathcal{O}(N)$ times the number of measurements required by IID random switching; this corresponds to a subsampling ratio of $\mathcal{O}( \mathrm{log} N /N)$. The bound in \eqref{eq:CSmeasth}, however, may be weak as simulation results show that Swift-Link outperforms a non-coherent algorithm that requires sub-linear number of channel measurements. As the p-trajectory can sample a maximum of $2N-1$ coordinates with a single RF chain, the worst case bound suggests that the use of $\mathcal{O}(K \mathrm{log}N)$ RF-chains at the TX guarantees CFO robust beam alignment using Swift-Link. In this work, we focus on the beamforming architecture with a single RF chain and show that our algorithm still performs well through  simulations.    
\section{Extending Swift-Link to wideband systems} \label{sec:wbextend}
In this section, we describe a frame structure that allows applying Swift-link for beam alignment in  wideband mmWave systems. For quasi-omnidirectional reception at the RX, let $\mathbf{H}[\ell] \in \mathbb{C}^{N \times N}$ be the $\ell ^{\mathrm{th}}$ tap of the $L$ tap wideband MIMO channel between the UPA at the TX and the RX. At mmWave frequencies, each of these taps is approximately sparse in the 2D-DFT representation \cite{heathoverview}. Using the notation in Sec.~\ref{sec:sysmodel}, the received samples can be expressed as 
\begin{equation}
y[n]=e^{\mathrm{j} \epsilon n }\sum_{\ell=0}^{L-1}\mathbf{b}_{n-\ell}^{T}\mathbf{H}\left[\ell\right]\mathbf{d}_{n-\ell}t[n-\ell]+v[n].
\label{eq:wideband}
\end{equation}
At any time instant, the channel measurement in \eqref{eq:wideband} is a linear combination of the previous $L$ transmitted symbols with weights determined by appropriate beam training vectors. The model in \eqref{eq:wideband} suggests that  the phased array configuration $\mathbf{b}_n\mathbf{d}^T_n$ can be changed every symbol duration. In practice, however, changing the configuration every symbol duration may not be possible as phase shifters have a finite settling time. Hence, a time domain sequence spanning several symbol durations is transmitted using a beam training configuration \cite{kiranchannel}. Using several such configurations, different spatial projections of the channel matrix can be obtained for beam alignment.
\begin{figure}[h]
\vspace{-2mm}
\centering
% left bottom right top
\includegraphics[width=0.48 \textwidth]{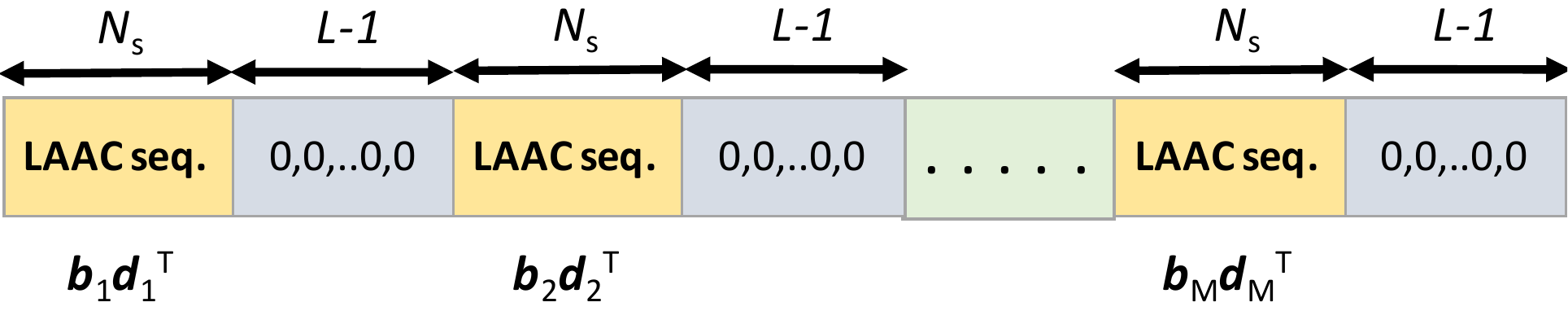}
  \caption{\small Frame structure used for beam training in a wideband system. A Low Aperiodic Auto-Correlation (LAAC) sequence is sent over different beam training configurations determined by Swift-Link. \normalsize}
  \label{fig:frame}
\end{figure}
The training sequences for different beam configurations are separated by a guard interval with zeros to avoid inter frame interference \cite{kiranchannel}. The settling time of the phase shifters is assumed to be smaller than the duration of the guard interval. The frame structure in Fig.~\ref{fig:frame}, uses Low Aperiodic Auto-Correlation (LAAC) sequences along the time dimension so that wideband channel estimation can be decoupled in the time dimension using the LAAC property. 
\par The TX transmits an $N_s$ length LAAC sequence and $L-1$ zeros across the channel for each of the phase shift configurations determined by Swift-Link's trajectory. Under the CFO impairment, the RX correlates the received signal with the same $N_s$ length sequence and measures the channel impulse response for each phase shift configuration at the TX. In this work, the sum of the responses obtained for each configuration is used for beam training, i.e., $\sum_{\ell \in \mathcal{I}_L} \mathbf{H}\left[\ell\right]$, the equivalent narrowband MIMO channel is considered for compressive beam alignment. As the channel measurements corresponding to consecutive beams are $N_s+L-1$ symbol durations apart, the effective CFO to be considered in beam training is $\left( N_s + L-1 \right)\epsilon$. The equivalent narrowband channel and CFO are estimated using Swift-Link. Then, the quantized versions of the singular vectors corresponding to the maximum singular value of the estimated channel matrix are used for beamforming at the TX \cite{nitinanalog}. It can be observed that beam alignment using the equivalent narrowband channel, i.e., $\sum_{\ell \in \mathcal{I}_L} \mathbf{H}\left[\ell\right]$ may result in sub-optimal beam alignment when the number of spatial paths within the delay spread are beyond Swift-Link's recoverable limit. In such scenarios, Swift-Link's framework can still be applied to recover each of the channel taps separately at the expense of higher complexity. 
\section{Swift-Link in action} \label{sec:swsim}
In this section, we evaluate Swift-Link for a practical wideband mmWave setting. We consider a hardware architecture in Fig.~\ref{fig:architect}, where the TX is equipped with a UPA of size $32 \times 32$ and the RX uses a fixed quasi-omnidirectional pattern for beam training. The phase shifters at the TX are assumed to have a resolution of $3$ bits. Our algorithm uses a core ZC sequence of length $32$ and root $11$ so that its $3$ bit phase quantized version satisfies the uniform DFT property in the magnitude sense. A disciplined approach to generate such unimodular ZAC sequences with quantized set of phases can be found in \cite{phasequant}. 
 We consider a mmWave carrier frequency of $28 \, \mathrm{GHz}$ and an operating bandwidth of $W=100 \, \mathrm{MHz}$ \cite{bandwidthref}, which corresponds to a symbol duration of $T=10\, \mathrm{ns}$.
\par The mmWave channel in our simulations was derived from the NYU channel simulator \cite{NYUSIM} for a TX-RX separation of $60 \, \mathrm{m}$ in an Urban-Micro Non-Line-of-Sight environment. The results we report are the averages over $100$ such channel realizations. For this setting, the omnidirectional RMS delay spread was found to be less than $43 \, \mathrm{ns}$ in more than $85 \%$ of the channel realizations. Considering the leakage effects due to pulse shaping, the wideband channel is modeled using $L=13$ taps corresponding to a duration of $130 \, \mathrm{ns}$. The channel is then scaled so that $\mathbb{E}\left[\sum_{\ell\in\mathcal{I}_{L}}\left\Vert \mathbf{H}\left[\ell\right]\right\Vert _{\text{F}}^{2}\right]=N^2$, where the expectation is taken across several channel realizations. % The channel matrix generated with the aforementioned  parameters is practical at mmWave and the angle-delay domain representation can be verified to be approximately sparse. 
\par According to the frame structure in Fig.~\ref{fig:frame}, the TX periodically transmits bipolar Barker sequences \cite{Barker} of length $N_s=13$ with different beam configurations determined by the algorithm. We use a bipolar sequence for the LAAC sequence so that its auto-correlation property is not heavily distorted by CFO. Under a CFO error, the RX correlates the received stream with the same Barker sequence and sums the channel response for each configuration. To recover the sparse channel components associated with each of the p- and n-trajectories of Swift-Link, we use EM-BG-AMP algorithm \cite{EMGAMP} as it has good phase transition properties compared to the other greedy algorithms. For a benchmark, we compare our algorithm with EM-BG-AMP with IID random phase shifts and zero CFO. We illustrate the vulnerability of CS to phase errors by evaluating EM-BG-AMP when IID random phase shifts are used for beam training at the TX. Furthermore, we also show the performance of a non-coherent compressive beam alignment method called Agile-Link \cite{agile}. We set the bin parameters of Agile-Link to $B_{\mathrm{az}}=B_{\mathrm{el}}=4$, and the number of hashes as $M/ \left( B_{\mathrm{az}}B_{\mathrm{el}} \right)$.   
For a fair comparison, beam training based on the equivalent narrowband channel is considered for all the algorithms. 
\begin{figure}[h]
\centering
% left bottom right top
\includegraphics[trim=2cm 6.25cm 2cm 7.4cm,clip=true,width=0.4 \textwidth]{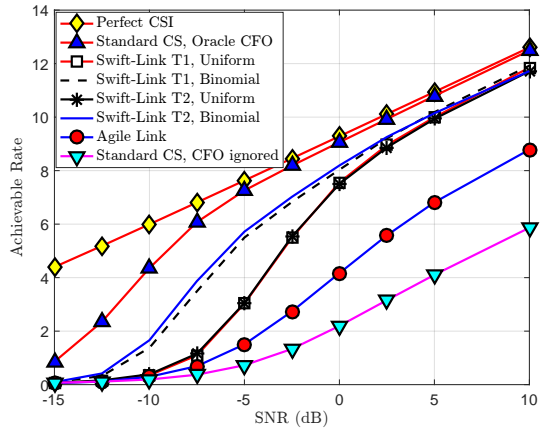}
  \caption{\small Achievable rate with SNR when $124$ beam training measurements are used by standard CS and Swift-Link, and $128$ by Agile-Link. Different values of CFO are used for the type I (T1) and type II (T2) methods to maintain the same induced beamspace shift.  \normalsize}
  \label{fig:SNR}
\end{figure}
\begin{figure}[h]
\centering
% left bottom right top
\includegraphics[trim=1cm 6.25cm 2cm 7.4cm,clip=true,width=0.4 \textwidth]{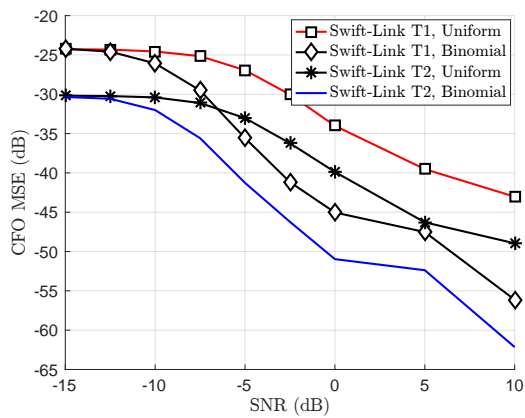}
  \caption{\small The mean squared error of the CFO estimate obtained from Swift-Link's beamspace shift estimation. The plot shows that sampling the coordinates according to a binomial distribution results in a better beamspace estimate when compared to the uniform case. \normalsize}
  \label{fig:CFOMSE}
\end{figure}
\par We use $\mathbf{f}_\mathrm{e}$ and $\mathbf{f}_\mathrm{a}$ to denote the beam alignment vectors estimated by an algorithm along the elevation and azimuth dimensions. The TX applies $\mathbf{f}_\mathrm{e}\mathbf{f}^T_\mathrm{a}$ to the UPA and the RX observes a wideband SISO channel with coefficients $\{\mathbf{f}_{\mathrm{e}}^{\ast}\mathbf{H}[\ell]\overline{\mathbf{f}}_{\mathrm{a}}\}_{\ell=0}^{L-1}$.  
The achievable rate is obtained using water filling based-power allocation \cite{tsewicomm} in the frequency domain for this equivalent wideband SISO channel seen after beam alignment. We define the SNR in the beam training measurements as $\mathrm{SNR}=10\mathrm{log}_{10}(1/\sigma^2)$. The algorithms are compared in terms of the achievable rate after beam alignment as a function of the received SNR, number of beam training measurements and CFO.
The impact of the residual CFO was ignored in computing the rate for all the algorithms. For $124$ beam training measurements, it can be noticed from Fig.~\ref{fig:SNR} that Swift-Link outperforms Agile-Link and EM-BG-AMP with IID random phase shifts at all levels of SNR. This is because Swift-Link exploits the structure in the unknown phase errors unlike other algorithms. For Fig.~\ref{fig:SNR}, Fig. ~\ref{fig:CFOMSE} and Fig.~\ref{fig:meas}, the analog domain CFO for the type I and type II trajectories was chosen differently as $800\, \mathrm{KHz}$ and $400\, \mathrm{KHz}$ to maintain the same beamspace shift effect in both cases. The offsets chosen are close to the maximum CFO correction limit of Swift-Link. The MSE of the CFO estimate is shown as a function of SNR in Fig.~\ref{fig:CFOMSE}. It can be observed from Fig.~\ref{fig:SNR} and Fig.~\ref{fig:CFOMSE} that sampling the coordinates on the contours according to the binomial distribution is better than the uniform case. 
\begin{figure}[h]
\centering
% left bottom right top
\includegraphics[trim=2cm 6.25cm 2cm 7.4cm,clip=true,width=0.4 \textwidth]{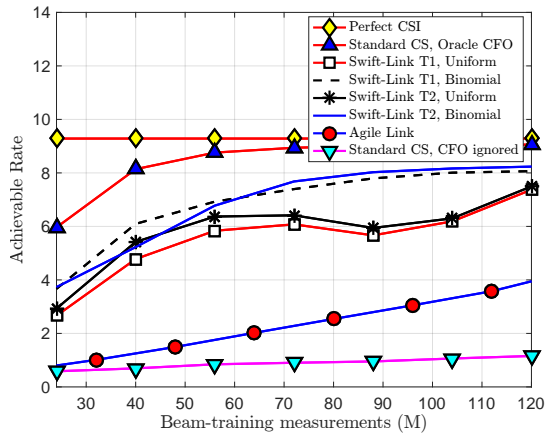}
  \caption{\small The plot shows the achievable rate with different number of beam training measurements for an SNR of $0$ dB. The decrease in the rate for uniform sampling is because the improvement in CFO estimate with channel measurements is not commensurate with the improvement in the channel estimate. \normalsize}
  \label{fig:meas}
\end{figure}
\par In Fig.~\ref{fig:meas}, we compute the achievable rate as a function of the number of beam training measurements for all the algorithms. It can be observed that Swift-Link performs better than the rest of them for any number of measurements. Swift-Link is limited in terms of beam training measurements as it can acquire a maximum of $2\left(2N-1\right)$ measurements over the $2N-1$ contours of an $N \times N$ matrix. For our setting, the number of measurements was sufficient enough to achieve a good rate. Developing new trajectories and low complexity correction strategies to recover channels in rich scattering environments is an interesting research direction. 
\begin{figure}[h]
\centering
% left bottom right top
\includegraphics[trim=2cm 6.25cm 2cm 7.4cm,clip=true,width=0.4 \textwidth]{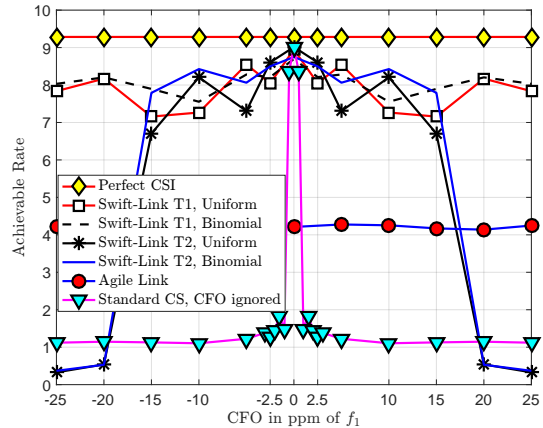}
  \caption{\small For an SNR of $0$ dB, the plot shows the achievable rate as a function of CFO when Swift-Link uses $124$ beam training measurements. Swift-Link performs well over a wide range of CFO. \normalsize}
  \label{fig:CFO}
\end{figure}
\par The CFO correction capability of Swift-Link depends on the type of trajectory used for beam alignment. For the type I trajectory, the induced beam-space shift can be uniquely determined if $\left|2\left(N_{s}+L-1\right)\epsilon\right|<\pi$. This requirement translates to a maximum correctable CFO of 
\begin{equation}
f_{\mathrm{max}}^{\mathrm{I}}=\frac{W}{4\left(N_{s}+L-1\right)},
\end{equation}
in the analog domain for the type I trajectory. As the induced beam-space shift is doubled with the type II trajectory due to interleaving, it supports a CFO correction limit of $f_{\mathrm{max}}^{\mathrm{II}}=f_{\mathrm{max}}^{\mathrm{I}}/2$. We compare the performance of all the algorithms for different values of CFO in Fig.~\ref{fig:CFO}. The number of beam training measurements were set to $124$ for Swift-Link, and $128$ for EM-BG-AMP with IID phase shifts and Agile-Link. It can be observed that standard CS algorithms fail even for a CFO of $1 \, \mathrm{ppm}$. As Agile-Link is a non-coherent method that considers just the magnitude of beam training measurements, its performance is invariant to CFO. It can be noticed that Swift-Link performs better than Agile-Link and standard CS within the supported range of CFO correction. The performance variation of Swift-Link within this range is because of the grid mismatch effects that arise due to CFO dependent beam-space shifts.
\par From a hardware design perspective, Swift-Link is advantageous over exhaustive beam search and Agile-Link as it requires a lower output dynamic range at the analog-to-digital converter. A higher dynamic range or peak-to-average-power-ratio (PAPR) of the received signal is not desirable as it can cause clipping effects after automatic gain control (AGC). As exhaustive beam scan uses highly directional beams, the variation in the received power across different beam scans is higher, especially for channels that are sparse in the beamspace. For example, when $\mathbf{H}=\mathbf{1}\mathbf{1}^T$, the maximum and minimum received power observed in a sequence of $N^2$ 2D-DFT based beam scans are $N^2$ and $0$. As Swift-Link uses ZC sequences in the antenna domain, the beams generated to acquire channel measurements are spread out in all directions. On the one hand, the maximum and minimum received power observed with Swift-Link's training are both $1$ for an all-ones $\mathbf{H}$. On the other hand, Agile-Link uses multi-armed beams that focus the transmit power in few directions and has a higher PAPR than Swift-Link. For the simulation setting with $M=124$ and $\mathrm{SNR}=0\,\mathrm{dB}$, we observed that the PAPR of the received signal for exhaustive beam search, Agile-Link and Swift-Link were $34.02 \, \mathrm{dB}$, $16.23\, \mathrm{dB}$ and $11.97 \,\mathrm{dB}$. We believe that Swift-Link marks an important step in translating compressed sensing technology to practical mmWave radios.
\section{Conclusions and future work} \label{sec:concl_fut}
In this work, we have studied the impact of CFO on compressed sensing based channel estimation and proposed a CFO robust beam-alignment algorithm called Swift-Link. Swift-Link uses circulantly shifted ZC sequences in the spatial domain and transforms the phased array architecture to a virtual switching architecture. We defined the concept of spectral mask that allows guaranteed channel reconstruction even with strict hardware constraints. In our future work, we will develop well conditioned spectral masks with ultra-low resolution phase shifters.
\par We define the concept of trajectory and determine the amount of circulant shift to be applied for each beam training measurement to minimize the impact of CFO on beam alignment. Swift-Link is advantageous over sparse self-calibration techniques in terms of memory and computational complexity, and has analytical guarantees. Our training sequence essentially aligns the sparsity basis of the calibration errors with the sparsity basis of the transformed signal. Furthermore, the impact of the calibration errors on CS-based recovery of the sparse signal was controlled by clever subsampling. Extending Swift-Link to generic sparse self-calibration problems is an interesting direction for future work. 
\section*{Appendix~A: Proof of Theorem \ref{theorem1}}
In this section, we derive a sufficient condition for Swift-Link's p-trajectory to satisfy the RIP on an average. For ease of notation, we label the $M_p$ channel measurements corresponding to the p-trajectory as $\left\{ y[n]\right\} _{n=p_{0}}^{p_{0}+M_{p}-1}$. As $r[n]+c[n]=n$ for every element on the $n^{\mathrm{th}}$ contour, the received samples can be expressed using  \eqref{eq:canonical} and \eqref{eq:analysis_G} as
\begin{equation}
y[n]=e^{-\mathrm{j} \epsilon p_{0}}\sum_{k\in \mathcal{I}_K}s_{k}e^{\mathrm{j} \left((\frac{2\pi x_k}{N}+\epsilon)r[n]+(\frac{2\pi y_k}{N}+\epsilon)c[n]\right)}.%, \forall p_0\leq n \leq p_0 +M_p -1.
\label{eq:samplesredif}
\end{equation}
Notice that we ignore noise in \eqref{eq:samplesredif} as the RIP constant is defined for a CS measurement matrix. The normalized energy of the sampled measurements is $E=\sum_{n=p_{0}}^{p_{0}+M_{p}-1}\left|y[n]\right|^{2}/M_p$. The result in Theorem \ref{theorem1} is derived using bounds on $\mathbb{E}_{\Omega_p}\left[E\right]$ for a sparse masked beamspace matrix $\mathbf{S}$. 
\par We define the pair-wise spacings between the non-zero locations of $\mathbf{S}$ as $\Delta x_{k,\ell}= x_{k}-x_{\ell}$ and $\Delta y_{k, \ell}= y_{k}-y_{\ell}$. For a given realization of the p-trajectory, the normalized energy of received samples can be expressed as 
\begin{align}
\nonumber
E&=\frac{1}{M_{p}}\sum_{n=p_{0}}^{p_{0}+M_{p}-1}\left|\sum_{k=1}^{K}s_{k}e^{\mathrm{j} (\epsilon+2\pi x_{k}/N)r[n]+j(\epsilon+2\pi y_{k}/N)c[n]}\right|^{2}\\
\nonumber
&=\frac{1}{M_{p}}\sum_{n=p_{0}}^{p_{0}+M_{p}-1}\sum_{k=1}^{K}\left|s_{k}\right|^{2}\\
&+\frac{1}{M_{p}}\sum_{n=p_{0}}^{p_{0}+M_{p}-1}\mathrm{Re}\Bigl\{ \underset{k\neq\ell}{\sum}s_{k}s_{\ell}^{\ast} 
e^{\mathrm{j} 2\pi\left(\Delta x_{k,\ell}r[n]+\Delta y_{k,\ell}c[n]\right)/N} \Bigr\}.
\label{eq:totalenergy_det}
\end{align}
Notice that $\sum_{k=1}^{K}\left|s_{k}\right|^{2}=\left\Vert \mathbf{S}\right\Vert _{F}^{2}$. When the trajectory is randomly chosen according to a uniform distribution, the p-trajectory acquires an energy of  
\begin{align}
\mathbb{E}_{\Omega_{p}}\left[E\right]&=\left\Vert \mathbf{S}\right\Vert _{F}^{2}+\frac{1}{M_{p}}\mathrm{Re} \Bigl\{ \underset{k\neq\ell}{\sum}s_{k}s_{\ell}^{\ast} \times \\
& \sum_{n=p_{0}}^{p_{0}+M_{p}-1}\mathbb{E}_{\Omega_{p}}[e^{\mathrm{j} 2\pi\left(\Delta x_{k,\ell}r[n]+\Delta y_{k,\ell}c[n]\right)/N}] \Bigr\}.
\label{eq:expected_energy}
\end{align}
on an average. To bound $\mathbb{E}_{\Omega_{p}}\left[E\right]$, we define the functions $T_{x,y}[n]=\mathbb{E}_{\Omega_{p}}\left[e^{\mathrm{j} \left(x r[n]+ y c[n]\right)}\right]$ and $B_{x,y}=\sum_{n=p_{0}}^{p_{0}+M_{p}-1}T_{x,y}[n]$, and derive an upper bound on $\left|B_{x,y}\right|$.
\par As the p-trajectory is chosen about the $\left(N-1\right)^{\mathrm{th}}$ contour, the support set for its samples linearly increases from $p_0+1$ to $N$ coordinates, and then decreases to $p_0+1$ coordinates along the trajectory. The upper half of the p-trajectory contains the first $N-p_0$ coordinates sampled by the p-trajectory and the lower half contains the remaining $M_p-N+p_0$ of them.   
%In this section, we analyze the upper half of the trajectory and extend our result to the entire $M_p$ length trajectory. 
The partial sums in $B_{x,y}$ corresponding to the upper and lower components of the p-trajectory are defined as $B^{\mathrm{U}}_{x,y}=\sum_{n=p_{0}}^{N-1}T_{x,y}[n]$ and $B^{\mathrm{L}}_{x,y}=B_{x,y}-B^{\mathrm{U}}_{x,y}$.  
\setcounter{theorem}{0}
\begin{lemma}
For $p_0 \leq n  <N-1$, the function $T_{x,y}[n]$ satisfies 
\begin{align}
\label{boundTx}
\left|T_{x,y}[n+1]-e^{\mathrm{j} x}T_{x,y}[n]\right|&<\frac{2}{n+2} \,\, \mathrm{and}\\
\label{boundTy}
\left|T_{x,y}[n+1]-e^{\mathrm{j} y}T_{x,y}[n]\right|&<\frac{2}{n+2}.
\end{align}
\label{Lemma_T}
\end{lemma}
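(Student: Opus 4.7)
The plan is to write $T_{x,y}[n]$ explicitly and track how it transforms under each of the two multiplicative phase shifts by carefully reindexing the contour sum. Since $p_0 \leq n < N-1$, both contours $n$ and $n+1$ lie in the ascending (upper) half, so the $n$-th contour $\{(r, n-r) : r = 0, 1, \ldots, n\}$ contains exactly $n+1$ grid points and the $(n+1)$-th contains $n+2$ grid points. Under uniform sampling over the $n$-th contour,
\begin{equation*}
T_{x,y}[n] = \frac{1}{n+1}\sum_{r=0}^{n} e^{\mathrm{j}(xr + y(n-r))}.
\end{equation*}

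First I would handle the shift by $e^{\mathrm{j}x}$. Multiplying the sum above by $e^{\mathrm{j}x}$ and reindexing via $r' = r+1$ gives $e^{\mathrm{j}x}T_{x,y}[n] = \frac{1}{n+1}\sum_{r'=1}^{n+1} e^{\mathrm{j}(xr' + y(n+1-r'))}$, which is precisely the unnormalized sum defining $T_{x,y}[n+1]$ with the $r=0$ term removed. Writing $A = \sum_{r=0}^{n+1} e^{\mathrm{j}(xr + y(n+1-r))}$, so that $T_{x,y}[n+1] = A/(n+2)$ and $e^{\mathrm{j}x}T_{x,y}[n] = (A - e^{\mathrm{j}y(n+1)})/(n+1)$, a common-denominator calculation yields
\begin{equation*}
T_{x,y}[n+1] - e^{\mathrm{j}x}T_{x,y}[n] = \frac{(n+2)e^{\mathrm{j}y(n+1)} - A}{(n+1)(n+2)}.
\end{equation*}

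The key step is to split the numerator as $(n+1)e^{\mathrm{j}y(n+1)} - (A - e^{\mathrm{j}y(n+1)})$. The first piece has modulus $n+1$, and the second is a sum of $n+1$ unit-modulus exponentials and hence has modulus at most $n+1$. The triangle inequality bounds the numerator by $2(n+1)$, and this bound is strict because equality would require all $n+1$ exponentials in $A - e^{\mathrm{j}y(n+1)}$ to be antiparallel to $e^{\mathrm{j}y(n+1)}$, a degenerate condition on $(x,y)$. Dividing by $(n+1)(n+2)$ gives the first claim $|T_{x,y}[n+1] - e^{\mathrm{j}x}T_{x,y}[n]| < 2/(n+2)$. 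The second bound follows by a symmetric argument: multiplying by $e^{\mathrm{j}y}$ without reindexing yields $e^{\mathrm{j}y}T_{x,y}[n] = (A - e^{\mathrm{j}x(n+1)})/(n+1)$, this time with the $r=n+1$ boundary term of $A$ removed, and the identical arithmetic delivers the bound.

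The main obstacle is purely bookkeeping: correctly identifying which of the two boundary terms ($r=0$ versus $r=n+1$) of the $(n+1)$-th contour sum is absent after each reindexing, and aligning the normalization constants $1/(n+1)$ versus $1/(n+2)$ so that the common-denominator cancellation is clean. Once this structure is in place, the proof reduces to a one-line triangle-inequality estimate, and I do not anticipate any deeper analytic difficulty.
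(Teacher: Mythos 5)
Your proposal is correct and follows essentially the same route as the paper: your common-denominator identity, obtained by observing that multiplying the $n^{\mathrm{th}}$ contour sum by $e^{\mathrm{j}x}$ (or $e^{\mathrm{j}y}$) reproduces the $(n+1)^{\mathrm{th}}$ contour sum minus one boundary term, is just a rearrangement of the paper's recursions for $T_{x,y}[n]$, and both arguments conclude with the same triangle-inequality estimate using the fact that the contour averages have modulus at most one. Your closing remark on when equality could occur is a small (harmless) refinement beyond what the paper records.
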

\textit{Proof.} In the upper-half of the p-trajectory, Swift-Link samples one of the $n+1$ coordinates on the $n^{\mathrm{th}}$  contour at random. For a uniform sampling distribution over the $n+1$ points, $T_{x,y}[n]$ is given by
\begin{equation}
T_{x,y}[n]=\frac{1}{n+1}\sum_{k=0}^{n}e^{\mathrm{j} (n-k)x+ky},
\label{Texplicit}
\end{equation}
for $p_0 \leq n  <N-1$. By triangle inequality, we have $\left|T_{x,y}[n]\right| \leq 1$ from \eqref{Texplicit}. The recursive equations for $T_{x,y}[n]$ can be derived from \eqref{Texplicit} as
\begin{align}
\label{Trecursx}
T_{x,y}[n+1]&=\left(\frac{n+1}{n+2}\right)e^{\mathrm{j} x}T_{x,y}[n]+\frac{1}{n+2}e^{\mathrm{j}(n+1)y}\,\, \textrm{and}\\
\label{Trecursy}
T_{x,y}[n+1]&=\left(\frac{n+1}{n+2}\right)e^{\mathrm{j} y}T_{x,y}[n]+\frac{1}{n+2}e^{\mathrm{j} (n+1)x}.
\end{align}
For $p_0 \leq n < N-1$, the absolute difference $\left|T_{x,y}[n+1]-e^{\mathrm{j} x}T_{x,y}[n]\right|$ can be computed from \eqref{Trecursx} as 
\begin{align}
\label{Trecx1}
\left|T_{x,y}[n+1]-e^{\mathrm{j} x}T_{x,y}[n]\right|&=\frac{1}{n+2}\left|e^{\mathrm{j} ny}-T_{x,y}[n]\right|.
\end{align}
The bound in \eqref{boundTx} follows from \eqref{Trecx1} using the triangle inequality and the fact that $\left|T_{x,y}[n]\right|\leq 1$. A similar argument can be made for \eqref{Trecursy} to obtain \eqref{boundTy}. 
\begin{lemma}
The absolute sum of the sequence $\{ T_{x,y}[n]\}_{n=p_0}^{p_0+M_p-1}$ corresponding to the upper-half and the total component of the p-trajectory are bounded as
\begin{align}
\label{Bup}
\left|B_{x,y}^{\mathrm{U}}\right|&\leq\left(2+\mathrm{log}\frac{N}{N+1-\left(M_{p}+1\right)/2}\right) \eta_{x,y} \,\,\, \mathrm{and}\\
\label{Ball}
\left|B_{x,y}\right|&\leq\left(4+2\mathrm{log}\frac{N}{N+1-\left(M_{p}+1\right)/2}\right) \eta_{x,y} \, , 
\end{align}
where $\eta_{x,y}=1/\mathrm{max}\{\left|\mathrm{sin}\left(x/2\right)\right|,\left|\mathrm{sin}\left(y/2\right)\right|\}$. 
\label{Lemma_B}
\end{lemma}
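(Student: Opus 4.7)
The plan is to turn each partial sum into a telescoping sum by multiplying by $1-e^{\mathrm{j} x}$ (or $1-e^{\mathrm{j} y}$) and invoking the one-step recursion from Lemma~1. Concretely, write $\delta_n = T_{x,y}[n+1]-e^{\mathrm{j} x}T_{x,y}[n]$, so that $|\delta_n|<2/(n+2)$ on the range where the recursion \eqref{Trecursx} is valid ($p_0\leq n\leq N-2$). Expanding
\begin{equation}
(1-e^{\mathrm{j} x})B_{x,y}^{\mathrm{U}} = \sum_{n=p_{0}}^{N-1}T_{x,y}[n]-\sum_{n=p_{0}}^{N-1}e^{\mathrm{j} x}T_{x,y}[n],
\end{equation}
and substituting $e^{\mathrm{j} x}T_{x,y}[n]=T_{x,y}[n+1]-\delta_n$ in the second sum for $n\leq N-2$, the interior terms cancel, leaving only the two boundary contributions $T_{x,y}[p_0]$ and $e^{\mathrm{j} x}T_{x,y}[N-1]$ together with $\sum_{n=p_{0}}^{N-2}\delta_n$.

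Using $|T_{x,y}[n]|\leq 1$ (immediate from \eqref{Texplicit} by the triangle inequality) bounds the boundary terms by $2$, while the error sum is controlled via a standard harmonic-to-logarithm estimate:
\begin{equation}
\sum_{n=p_{0}}^{N-2}\frac{2}{n+2}\;\leq\;2\log\frac{N}{N+1-(M_{p}+1)/2}.
\end{equation}
Dividing by $|1-e^{\mathrm{j} x}|=2|\sin(x/2)|$ yields a bound of the desired form with $1/|\sin(x/2)|$ in front of the logarithm. Running the identical argument with the companion recursion \eqref{Trecursy} gives the same bound with $1/|\sin(y/2)|$, and taking the better of the two produces exactly $\eta_{x,y}=1/\max\{|\sin(x/2)|,|\sin(y/2)|\}$. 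Absorbing the factor of two from the boundary terms into the constant in front gives the claimed inequality \eqref{Bup} for $|B^{\mathrm{U}}_{x,y}|$.

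For the full sum $B_{x,y}=B_{x,y}^{\mathrm{U}}+B_{x,y}^{\mathrm{L}}$, the plan is to mirror the argument on the lower half of the p-trajectory. Here the $n$-th contour ($N\leq n\leq 2N-2$) has $2N-1-n$ coordinates instead of $n+1$, so I would first establish an analogue of Lemma~1 on this range by writing $T_{x,y}[n]$ explicitly as in \eqref{Texplicit} but with the new support, deriving the corresponding one-step recursions (with the shrinking prefactor $(2N-2-n)/(2N-1-n)$), and showing the same $|T_{x,y}[n+1]-e^{\mathrm{j} x}T_{x,y}[n]|<2/(2N-n)$ bound. The same telescoping then gives a bound on $|B_{x,y}^{\mathrm{L}}|$ with an identical logarithmic factor by the symmetry of the contour sizes about $n=N-1$. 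Adding $|B_{x,y}^{\mathrm{U}}|$ and $|B_{x,y}^{\mathrm{L}}|$ by the triangle inequality doubles both the constant and the logarithmic term, producing \eqref{Ball}.

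The main obstacle I anticipate is bookkeeping on the lower half: the recursion and the index range for the error sum have to be re-derived from scratch because Lemma~1 is only stated for $p_0\leq n<N-1$, and it is easy to lose the exact matching of the two logarithmic contributions unless the symmetry of the trajectory about the $(N-1)$-th contour (guaranteed by the choice $p_0=N-(M_p+1)/2$) is exploited carefully. Once the $B^{\mathrm{L}}$ bound is aligned with the $B^{\mathrm{U}}$ bound, the rest is a straightforward triangle-inequality combination with no further estimation.
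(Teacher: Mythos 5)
Your proposal is correct and follows essentially the same route as the paper: multiply $B^{\mathrm{U}}_{x,y}$ by $1-e^{\mathrm{j}x}$ (resp. $1-e^{\mathrm{j}y}$), telescope using the recursion bound of Lemma~\ref{Lemma_T}, control the boundary terms by $|T_{x,y}[n]|\leq 1$ and the error sum by a harmonic-to-logarithm estimate, divide by $2|\mathrm{sin}(x/2)|$, and take the better of the $x$- and $y$-bounds to get $\eta_{x,y}$. The paper likewise handles the lower half only ``similarly'' by the symmetry of contour sizes about the $(N-1)^{\mathrm{th}}$ contour, so the extra bookkeeping you anticipate there (the exact denominator in the lower-half recursion) is the only detail to pin down, and it does not change the final bound.
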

\textit{Proof.} The partial sum corresponding to the upper half of the p-trajectory can be written as 
\begin{align}
\label{Diff1_BU}
B_{x,y}^{\mathrm{U}}&=T_{x,y}\left[p_{0}\right]+\sum_{n=p_{0}}^{N-2}T_{x,y}[n+1], \mathrm{and}\\
\label{Diff2_BU}
e^{\mathrm{j}x}B_{x,y}^{\mathrm{U}}&=e^{\mathrm{j}x}T_{x,y}\left[N-1\right]+e^{\mathrm{j}x}\sum_{n=p_{0}}^{N-2}T_{x,y}[n].
\end{align} 
We use $H_I$ to denote the harmonic sum of the first $I$ natural numbers, i.e., $H_I=\sum^{I}_{i=1} 1/i$.
From \eqref{Diff1_BU} and \eqref{Diff2_BU}, we have 
\begin{align}
\nonumber
|(1-e^{\mathrm{j} x})B_{x,y}^{\mathrm{U}}|&= \Bigl| T_{x,y}[p_{0}]-e^{\mathrm{j} x}T_{x,y}[N-1]+ \\
\label{subtract1}
& \sum_{n=p_{0}}^{N-2}(T_{x,y}[n+1]-e^{ \mathrm{j} x}T_{x,y}[n])\Bigr|.
\end{align}
Now, we apply the triangle inequality to \eqref{subtract1} and upper bound $|(1-e^{\mathrm{j} x})B_{x,y}^{\mathrm{U}}|$ using $|T_{x,y}\left[ n \right] | \leq 1$ and  \eqref{boundTx}. Therefore, 
\begin{align}
\label{subtract2}
|(1-e^{\mathrm{j} x})B_{x,y}^{\mathrm{U}}| & \leq 2+2\sum_{n=p_{0}}^{N-2}\frac{1}{n+2} \,= \,2+ 2(H_N-H_{p_0+1}).
\end{align}
Using $\left|1-e^{\mathrm{j} x}\right|=2\left|\mathrm{sin}\left(x/2\right)\right|$ and  $\mathrm{log}\left(I+1\right)< H_I \leq 1+ \mathrm{log}I$, $\left|B_{x,y}^{\mathrm{U}}\right|$ can be bounded as 
\begin{equation}
\left|B_{x,y}^{\mathrm{U}}\right|\leq\left(2+\mathrm{log}\frac{N}{p_{0}+2}\right)\frac{1}{\left|\mathrm{sin}\left(x/2\right)\right|}.
\label{Bup_x}
\end{equation}
Similarly, the partial sum corresponding to the lower half of the p-trajectory can be bounded as 
\begin{equation}
\left|B_{x,y}^{\mathrm{L}}\right|\leq\left(2+\mathrm{log}\frac{N-1}{p_{0}+2}\right)\frac{1}{\left|\mathrm{sin}\left(x/2\right)\right|}.
\label{Blow_x}
\end{equation}
Applying triangle inequality to $B_{x,y}=B_{x,y}^{\mathrm{U}}+B_{x,y}^{\mathrm{L}}$, we have
\begin{align}
\left|B_{x,y}\right|&\leq\left(4+\mathrm{log}\frac{N\left(N-1\right)}{\left(p_{0}+2\right)^{2}}\right)\frac{1}{\left|\mathrm{sin}\left(x/2\right)\right|}\\
\label{Bineqx}
&\leq\left(4+2\mathrm{log}\frac{N}{p_{0}+2}\right)\frac{1}{\left|\mathrm{sin}\left(x/2\right)\right|}.
\end{align}
By the same structure of arguments along the $y$ dimension, we use \eqref{boundTy} to obtain
\begin{equation}
\left|B_{x,y}\right|\leq\left(4+2\mathrm{log}\frac{N}{p_{0}+2}\right)\frac{1}{\left|\mathrm{sin}\left(y/2\right)\right|}.
\label{Bineqy}
\end{equation}
Note that $p_0=N-(M_p+1)/2$. The inequalities in \eqref{Bineqx} and \eqref{Bineqy} can be combined to prove Lemma \ref{Lemma_B}.  
\par Now, we derive a sufficient condition for Swift-Link's p-component to satisfy the RIP with constant $\delta_K$ on an average. From \eqref{eq:expected_energy}, the deviation in the mean energy acquired by the p-trajectory is $|\mathbb{E}_{\Omega_{p}}[E]-\Vert\mathbf{S}\Vert _{F}^{2}|=\bigl|\mathrm{Re}\bigl\{\underset{k\neq\ell}{\sum}s_{k}s_{\ell}^{\ast}B_{2\pi\Delta x_{k,\ell}/N,2\pi\Delta y_{k,\ell}/N}\bigr\}\bigr|/ M_p$. Using $\left|\mathrm{Re}\{x\}\right|\leq\left|x\right|$ and the triangle inequality, this deviation can be upper bounded as
\begin{align}
\label{eq:conc_re_abs}
|\mathbb{E}_{\Omega_{p}}[E]-\Vert\mathbf{S}\Vert _{F}^{2}|& \leq\frac{1}{M_{p}}\underset{k\neq\ell}{\sum}\left|s_{k}\right|\left|s_{\ell}\right|\left|B_{2 \pi \Delta x_{k,\ell}/N,2 \pi \Delta y_{k,\ell}/N}\right|.
\end{align}
From \eqref{Ball}, it can be observed that the upper bound on $\left|B_{2 \pi \Delta x_{k,\ell}/N,2 \pi \Delta y_{k,\ell}/N}\right|$ is large when the spatial frequencies $\left(x_k,y_k\right)$ and $\left(x_{\ell},y_{\ell}\right)$ are close to each other. 
\par With $d_{\mathrm{min}}$ defined in Section \ref{sec:swlink_ana}, for any $k \neq \ell$ we have 
\begin{align}
\nonumber
\left|B_{2 \pi \Delta x_{k,\ell}/N,2 \pi \Delta y_{k,\ell}/N}\right| &\leq \frac{1}{\mathrm{sin}\left(\pi d_{\mathrm{min}}/N\right)}\times \\
&\left(4+2\mathrm{log}\frac{N}{N+1-(M_p+1)/2}\right).
\label{B_dmin}
\end{align}
The sum $\underset{k\neq\ell}{\sum}\left|s_{k}\right|\left|s_{\ell}\right|$ can be re-written as 
$\underset{k\neq\ell}{\sum}\left|s_{k}\right|\left|s_{\ell}\right|=\left\Vert \mathbf{S}\right\Vert _{\ell_{1}}^{2}-\left\Vert \mathbf{S}\right\Vert _{F}^{2}$.
Using the Cauchy-Schwartz inequality over the $K$ sparse matrix $\mathbf{S}$, i.e., $\left\Vert \mathbf{S}\right\Vert _{\ell_{1}}\leq\sqrt{K}\left\Vert \mathbf{S}\right\Vert _{F}$, we can write 
\begin{equation}
\underset{k\neq\ell}{\sum}\left|s_{k}\right|\left|s_{\ell}\right| \leq \left(K-1\right)\left\Vert \mathbf{S}\right\Vert^2 _{F}.
\label{eq:cauchy_sk_sl}
\end{equation} 
The inequalities in \eqref{eq:conc_re_abs}, \eqref{B_dmin} and \eqref{eq:cauchy_sk_sl} can be combined as 
\begin{align}
\nonumber
\left|\mathbb{E}_{\Omega_{p}}\left[E\right]-\left\Vert \mathbf{S}\right\Vert _{F}^{2}\right|&\leq\frac{\left(K-1\right)\left\Vert \mathbf{S}\right\Vert _{F}^{2}}{M_{p}\mathrm{sin}\left(\pi d_{\mathrm{min}}/N\right)} \times \\
&\left(4+2\mathrm{log}\frac{N}{N+1-\left(M_{p}+1\right)/2}\right).
\label{eq:expect_bound_fin}
\end{align}
By the definition of RIP in \eqref{eq:RIPdefn}, it can be noticed from \eqref{eq:expect_bound_fin} that the CS matrix corresponding to the p-trajectory satisfies RIP with constant $\delta_K$, on an average, when 
\begin{equation}
\frac{\left(K-1\right)}{M_{p}\mathrm{sin}\left(\pi d_{\mathrm{min}}/N\right)}\left(4+2\mathrm{log}\frac{N}{N+1-\left(M_{p}+1\right)/2}\right) \leq \delta_K,
\end{equation}
which proves Theorem \ref{theorem1}. 
\bibliographystyle{IEEEtran}
\bibliography{refs}
\vspace{10mm}
\begin{IEEEbiography}[{\includegraphics[width=1in,height=1.25in,clip,keepaspectratio]{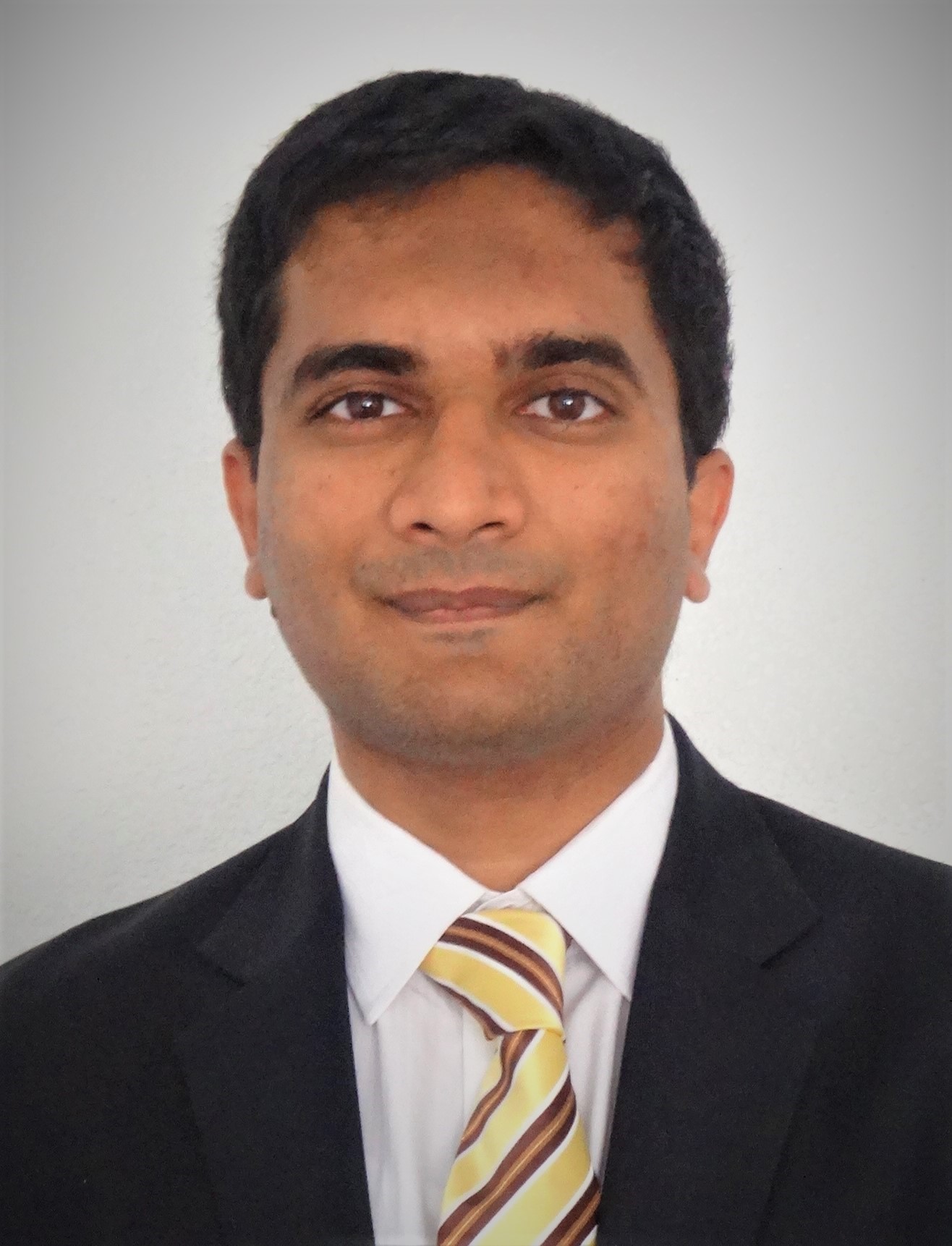}}]%
{Nitin Jonathan Myers}
received the B.Tech and M.Tech degrees in Electrical Engineering from the Indian Institute of Technology (IIT) Madras in 2016. He is currently pursuing the Ph.D. degree at the University of Texas at Austin. His research interests lie in the areas of wireless communications and signal processing. Mr. Myers received the 2018 Electrical and Computer Engineering research award from the Cockrell School of Engineering at The University of Texas at Austin. During his undergraduate days at IIT Madras, he received the DAAD WISE scholarship in 2014, and the Institute Silver Medal in 2016. 
\end{IEEEbiography}
\begin{IEEEbiography}[{\includegraphics[width=1in,height=1.25in,clip,keepaspectratio]{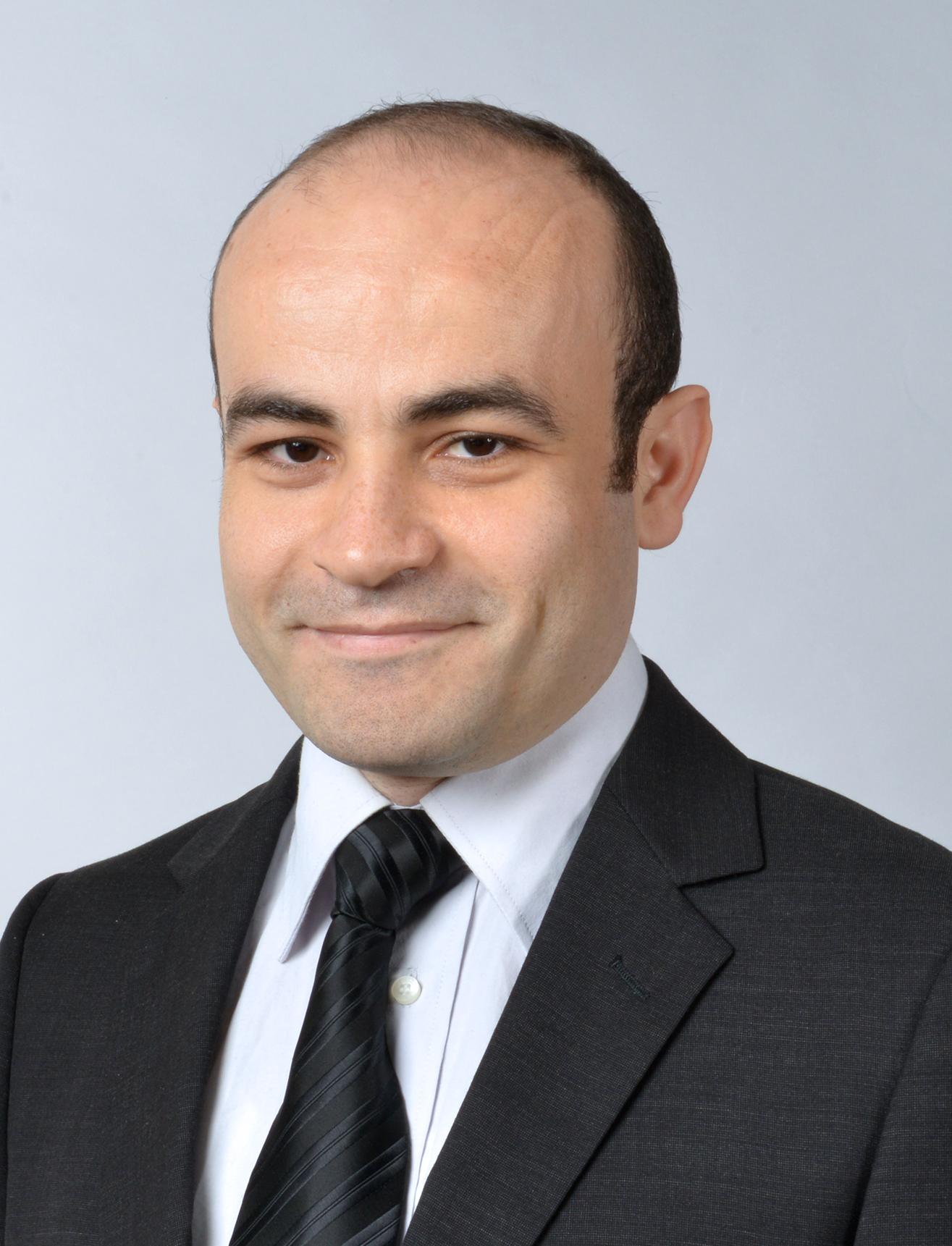}}]%
{Amine Mezghani}
(S'08, M'16) received the ``Dipl.-Ing." in Electrical 
Engineering from the Technische Universit\"at M\"unchen, Germany, and the
``Diplome d'Ing\'enieur" degree from the \'Ecole Centrale Paris, France,
both in 2006. He received the Ph.D. degree in Electrical Engineering
from the Technische Universit\"at M\"unchen in 2015. He was the recipient
of the Rohde \& Schwarz Outstanding Dissertation Award in 2016. In
Summer 2017, he joined the University of Texas at Austin as a
postdoctoral fellow. Prior to this, he has been working as
postdoctoral scholar in the Department of Electrical Engineering and
Computer Science of the University of California, Irvine, USA. His
current research interests are on millimeter-wave massive MIMO,
hardware constrained communication theory, and signal processing under
low-resolution analog-to-digital and digital-to-analog converters.
\end{IEEEbiography}
\begin{IEEEbiography}[{\includegraphics[width=1in,height=1.25in,clip]{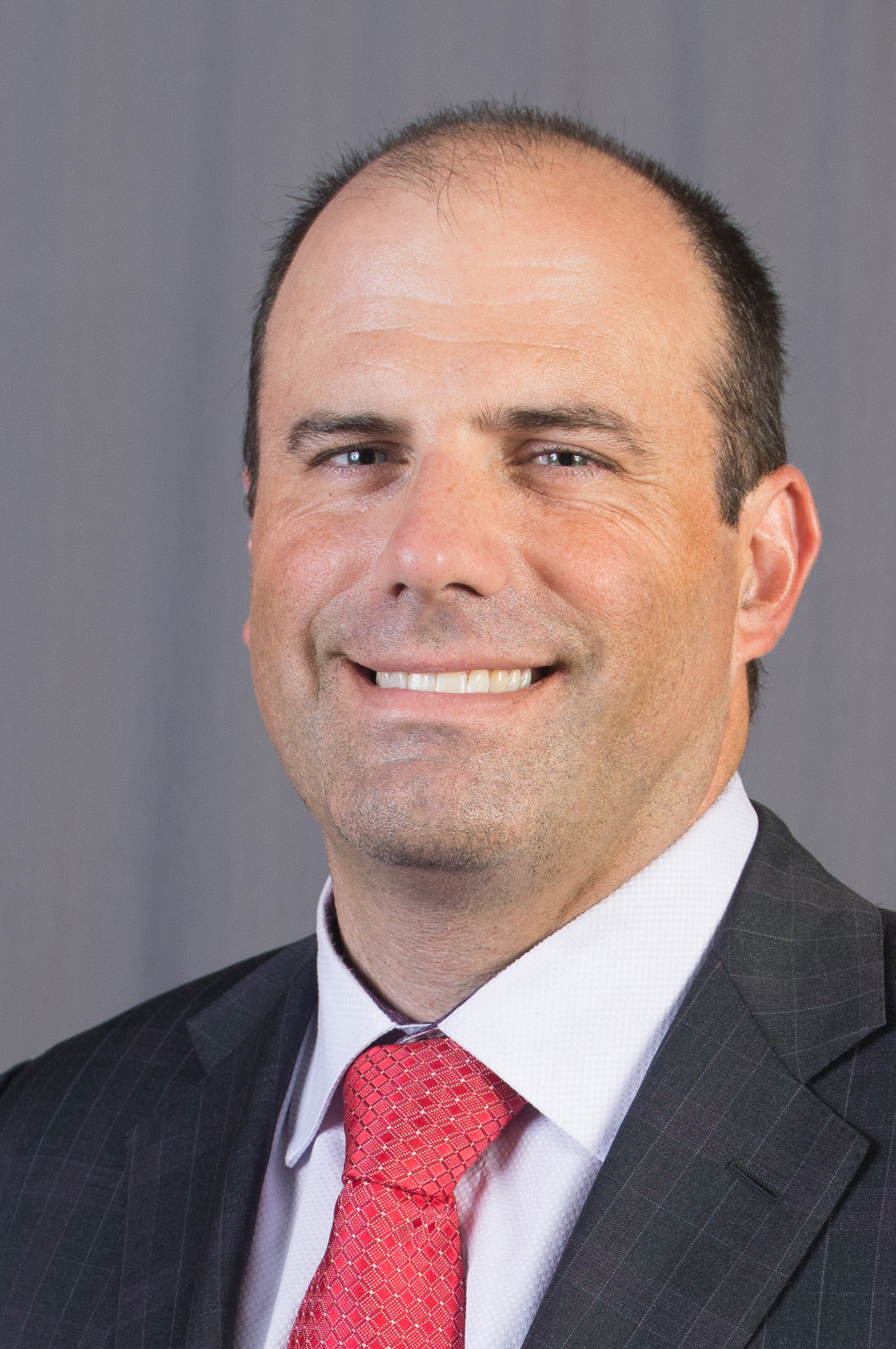}}]%
{Robert W. Heath Jr. } 
(S'96 - M'01 - SM'06 - F'11)  received the B.S. and M.S. degrees from the University of Virginia, Charlottesville, VA, in 1996 and 1997 respectively, and the Ph.D. from Stanford University, Stanford, CA, in 2002, all in electrical engineering. From 1998 to 2001, he was a Senior Member of the Technical Staff then a Senior Consultant at Iospan Wireless Inc, San Jose, CA where he worked on the design and implementation of the physical and link layers of the first commercial MIMO-OFDM communication system. Since January 2002, he has been with the Department of Electrical and Computer Engineering at The University of Texas at Austin where he is a Cullen Trust for Higher Education Endowed Professor, and is a Member of the Wireless Networking and Communications Group. He is also President and CEO of MIMO Wireless Inc. He authored “Introduction to Wireless Digital Communication” (Prentice Hall, 2017) and  “Digital Wireless Communication: Physical Layer Exploration Lab Using the NI USRP’' (National Technology and Science Press, 2012), and co-authored “Millimeter Wave Wireless Communications’' (Prentice Hall, 2014).
\par Dr. Heath has been a co-author of sixteen award winning conference and journal papers including  the 2010 and 2013 EURASIP Journal on Wireless Communications and Networking best paper awards, the 2012 Signal Processing Magazine best paper award, a 2013 Signal Processing Society best paper award, 2014 EURASIP Journal on Advances in Signal Processing best paper award, the 2014 and 2017 Journal of Communications and Networks best paper awards, the 2016 IEEE Communications Society Fred W. Ellersick Prize, the 2016 IEEE Communications and  Information Theory Societies Joint Paper Award, and the 2017 Marconi Prize Paper Award. He received the 2017 EURASIP Technical Achievement award. He was a distinguished lecturer in the IEEE Signal Processing Society and is an ISI Highly Cited Researcher. In 2017, he was selected as a Fellow of the National Academy of Inventors. He is also an elected member of the Board of Governors for the IEEE Signal Processing Society, a licensed Amateur Radio Operator, a Private Pilot, a registered Professional Engineer in Texas. He is currently Editor-in-Chief of IEEE Signal Processing Magazine.
\end{IEEEbiography}

\end{document}